\newtheorem{lemma}{Lemma}
\newtheorem{preposition}{Preposition}
\newtheorem{assumption}{Assumption}
\numberwithin{equation}{section}
\newcommand{\calE}{\mathsf{E}}
\newtheoremstyle{named}{}{}{\itshape}{}{\bfseries}{.}{.5em}{\thmnote{#3 }#1}
\theoremstyle{named}
\newtheorem*{namedtheorem}{Theorem}
\begin{document}
\pagestyle{plain}




\title{\LARGE\textbf{General Couplings of Four Dimensional Maxwell-Klein-Gordon System: Global Existence}}

\author{{Mulyanto${^2}$, Fiki Taufik Akbar$^{1,2}$, and Bobby Eka Gunara$^{1,2}$\footnote{Corresponding author} }\\ \\
	$^{1}$\textit{\small Indonesia Center for Theoretical and
		Mathematical Physics (ICTMP)}\\
	$^{2}$\textit{\small Theoretical Physics Laboratory}\\
	\textit{\small Theoretical High Energy Physics and Instrumentation Research Group,}\\
	\textit{\small Faculty of Mathematics and Natural Sciences,}\\
	\textit{\small Institut Teknologi Bandung}\\
	\textit{\small Jl. Ganesha no. 10 Bandung, Indonesia, 40132}\\ \\
	\small email: mulyanto@students.itb.ac.id, ftakbar@fi.itb.ac.id, bobby@fi.itb.ac.id}

\date{}

\maketitle




\begin{abstract}

In this paper, we consider the multi component fields interactions of the complex scalar fields and the electromagnetic fields (Maxwell-Klein-Gordon system) on four dimensional Minkowski spacetime with general gauge couplings and the scalar potential turned on. Moreover, the  complex scalar fields span an internal manifold  assumed to be K\"{a}hler. Then, by taking the  K\"{a}hler potential to be bounded by $U(1)^N$ symmetric  K\"{a}hler potential, the gauge couplings to be bounded functions, and the scalar potential to be the form of either polynomial, sine-Gordon, or Toda potential, we prove the global existence  of the system.

\end{abstract}




\section{Introduction}
\label{sec:Introduction}

The Maxwell-Klein-Gordon (MKG) system describes the multi-field interaction between the complex scalar fields $\phi^a$ and the electromagnetic fields $A_{\mu}^\Lambda$ on 4-dimensional Minkowski spacetime with the standard coordinate $x^{\mu} = (t,x^i)$ where $\mu=0,1,2,3$ and $i=1,2,3$, and metric $\eta_{\mu\nu} = \mathrm{diag}(-1,1,1,1)$. Let $\Lambda,\Sigma, \Gamma = 1,2,3, ...,N_V$, denotes the number of gauge fields and the Roman index $a,b, c = 1,2,3, ...,N_C$, describes the number of complex scalar fields. The complex scalar fields span an internal manifold assumed to be Hermitian manifold endowed with metric $g_{a\bar{b}}$.

The Lagrangian of this MKG system with general couplings can be written down as
\begin{equation}\label{langrange}
\mathcal{L}=-\frac{1}{4}{{h}_{\Lambda \Sigma }}\left( \phi ,\overline{\phi } \right)\mathcal{F}_{\mu \nu }^{\Lambda }{{\mathcal{F}}^{\Sigma \left| \mu \nu  \right.}}+\frac{1}{4}{{k}_{\Lambda \Sigma }}\left( \phi ,\overline{\phi } \right)\mathcal{F}_{\mu \nu }^{\Lambda }{{\tilde{\mathcal{F}}}^{\Sigma \left| \mu \nu  \right.}}-{{g}_{a\overline{b}}}\left( \phi ,\overline{\phi } \right){{D}_{\mu }}{{\phi }^{a}}\overline{{{D}^{\mu }}{{\phi }^{b}}}-V\left( \phi ,\bar{\phi } \right) ~ ,
\end{equation}
where $\mathcal{F}_{\mu \nu }^{\Lambda } \equiv {{\partial }_{\mu }}A_{\nu }^{\Lambda }-{{\partial }_{\nu }}A_{\mu }^{\Lambda }$ is the gauge field strength, ${{D}_{\mu }}{{\phi }^{a}} \equiv {{\partial }_{\mu }}{{\phi }^{a}}-i{{q}_{\Gamma }}A_{\mu }^{\Gamma }{{\phi }^{a}}$ is the covariant derivative, ${{q}_{\Gamma}}$ is a constant function, and ${{\tilde{\mathcal{F}}}^{\Sigma \left| \mu \nu  \right.}}$ is the Hodge dual of the $\mathcal{F}_{\mu \nu }^{\Lambda }$. The real functions 
${{h}_{\Lambda \Sigma }}\left( \phi ,\bar{\phi } \right)$ and ${{k}_{\Lambda \Sigma }}\left( \phi ,\bar{\phi } \right)$ denote the gauge couplings, and the real function $V\left( \phi ,\bar{\phi } \right)$ is the scalar potential of MKG system. Since the Lagrangian \eqref{langrange} is invariant under local $U(1)^N$ symmetry where $U(1)^N = \bigotimes_{i=1}^{N}U(1)$, without loss of generality we could take $ h_{\Lambda \Sigma } \left( \phi ,\bar{\phi } \right) = h_{\Lambda \Sigma } (|\phi|^2) $, $k_{\Lambda \Sigma }\left( \phi ,\bar{\phi } \right) = k_{\Lambda \Sigma } (|\phi|^2) $, and  $V\left( \phi ,\bar{\phi } \right) = V (|\phi|^2)$ where $ |\phi|^2 \equiv  \delta_{a\bar{b}} \phi^{a} \bar{\phi}^{\bar{b}}$.  Moreover, to simplify the case, the scalar manifold has to be K\"{a}hler manifold equipped with metric $g_{a\bar{b}} = \partial_{a}\partial_{\bar{b}}K$ where $K\equiv K(\phi,\bar{\phi})$ is the K\"{a}hler potential admitting $U(1)^N$ isometry.

In recent years, the analysis of the global solution of the Maxwell Klein-Gordon system has been considered in detail. These systems obey the local gauge transformation. Then, we have the freedom to choose the gauge condition according to the method and problem considered. The well-known gauge choices are the Coulomb gauge, $\partial^{i} {{A}_{i}}=0$, and the Lorenz gauge, ${{\partial }_{\mu }}{{A}^{\mu }}=0$. Considering the Coulomb gauge condition, Klainerman and Machedon proved that the solution exists globally for the finite initial data \cite{Klainerman}. As for the global well-posedness of the MKG systems in the Lorentz gauge condition, we suggest an interested reader to consult \cite{Selberg}.

Another gauge condition we can take into account is the temporal gauge condition, $A_{0}=0$ as in \cite{Ma, yuan}. In this choice, the local and global existence of the Yang-Mills-Higgs equation is also obtained \cite{Eardley,Moncrief}. In the previous studies, we consider the temporal gauge conditions to prove the local existence of the bosonic part of $N=1$ supersymmetric Yang-Mills-Higgs  with  general couplings and the scalar potential turned on \cite{akbar2}. In particular, we take the K\"ahler potential to be bounded above by the $U(n)$ symmetric K\"ahler potential, the first derivative of the scalar potential to be locally Lipshitz,  and the first derivative of gauge couplings to be at most linear growth functions.

It is of interest to complete the proof of our previous study by showing its global existence. However, we are facing a problem to regularize such as the three vertex terms of the gauge field $A_{\mu}^\Lambda$ when the gauge couplings ${{h}_{\Lambda \Sigma }}\left( \phi ,\bar{\phi } \right)$ and ${{k}_{\Lambda \Sigma }}\left( \phi ,\bar{\phi } \right)$ are no longer constants and moreover, the internal manifold of the $\sigma$-model is not flat. Therefore, to evade such a problem, in this paper we only establish the proof of  the global existence of the MKG system for multi-field interactions with the addition of general gauge coupling in temporal gauge and scalar potential turned on. In particular, we assume that the gauge couplings  ${{h}_{\Lambda \Sigma }}\left( \phi ,\bar{\phi } \right)$ and ${{k}_{\Lambda \Sigma }}\left( \phi ,\bar{\phi } \right)$  are bounded smooth functions, the  K\"{a}hler potential $K$ has to be bounded by $U(1)^N$ symmetric  K\"{a}hler potential which generalizes the case of \cite{akbar2}, and the scalar potential has the form of either polynomial, sine-Gordon, or Toda potential.

We organize the paper as follows. In Section \ref{sec:FieldEquationsOfMotions}, we derive the field equations of motions for both gauge and complex scalar fields. We also discuss some assumptions regarding the gauge coupling functions. In Section \ref{sec:TheInternalManifold}, we provide the discussion of  properties of the scalar internal manifold which is K\"ahler. In Section \ref{sec:Estimates}, we discuss some estimates for the gauge and complex scalar fields which are the important ingredients for proving the global existence. Finally, the final proof and the main theorem are presented in Section \ref{sec:GlobalExistence}. 




\section{The Field Equations of Motions}
\label{sec:FieldEquationsOfMotions}

Let us first consider the equations of motions of Lagrangian (\ref{langrange})
\begin{equation}\label{eom1}
{{\partial }^{\alpha }}\mathcal{F}_{\alpha \gamma }^{\Sigma }={{h}^{\Lambda \Sigma }}\left\{ i{{q}_{\Lambda }}{{g}_{a\overline{b}}}\left( {{D}_{\gamma }}{{\phi }^{a}}{{\overline{\phi }}^{{\bar{b}}}}-{{\phi }^{a}}\overline{{{D}_{\gamma }}{{\phi }^{b}}} \right)-{{\partial }^{\alpha }}{{h}_{\Lambda \Gamma }}\mathcal{F}_{\alpha \gamma }^{\Gamma }+{{\partial }^{\alpha }}{{k}_{\Lambda \Gamma }}\tilde{\mathcal{F}}_{\alpha \gamma }^{\Gamma } \right\},
\end{equation}
\begin{equation}\label{eom2}
{{D}^{\alpha }}{{D}_{\alpha }}{{\phi }^{{\bar{b}}}}={{g}^{d\bar{b}}}\left( \frac{1}{4}\mathcal{F}_{\alpha \beta }^{\Lambda }{{\partial }_{d}}G_{\Lambda }^{\alpha \beta }-{{\partial }_{d}}{{g}_{a\bar{c}}}{{D}_{\alpha }}{{\phi }^{a}}\overline{{{D}^{\alpha }}{{\phi }^{c}}}-{{\partial }^{\alpha }}{{g}_{d\bar{c}}}\overline{{{D}_{\alpha }}{{\phi }^{c}}}-{{\partial }_{d}}V \right),
\end{equation}
with ${{h}^{\Lambda \Sigma }}$ is the inverse of ${{h}_{\Lambda \Sigma }}$, and
\begin{equation}
G_{\Lambda }^{\alpha \beta }=-{{h}_{\Lambda \Sigma }}{{\mathcal{F}}^{\Sigma \left| \alpha \beta  \right.}}+{{k}_{\Lambda \Sigma }}{{\tilde{\mathcal{F}}}^{\Sigma \left| \alpha \beta  \right.}}.
\end{equation}

The gauge field strength also yields the Bianchi identity
\begin{equation}
{{\partial }_{\gamma }}\mathcal{F}_{\alpha \mu }^{\Sigma }+{{\partial }_{\alpha }}\mathcal{F}_{\mu \gamma }^{\Sigma }+{{\partial }_{\mu }}\mathcal{F}_{\gamma \alpha }^{\Sigma }=0 ~ .
\end{equation}
Taking the covariant divergence of (\ref{eom1}) and use the Bianchi identity of the gauge field strength, we have
\begin{equation}\label{dife1}
\begin{split}
\Box \mathcal{F}_{\mu \gamma }^{\Sigma }&=-i{{q}_{\Lambda }}{{h}^{\Lambda \Sigma }}\left\{ {{\partial }_{\mu }}{{g}_{a\bar{b}}}{{D}_{\gamma }}{{\phi }^{a}}{{{\bar{\phi }}}^{{\bar{b}}}}+{{g}_{a\bar{b}}}{{\partial }_{\mu }}\left( {{D}_{\gamma }}{{\phi }^{a}}{{{\bar{\phi }}}^{{\bar{b}}}} \right) \right\}+{{h}^{\Lambda \Sigma }}\left( {{\partial }^{\alpha }}{{h}_{\Lambda \Gamma }} \right){{\partial }_{\alpha }}\mathcal{F}_{\gamma \mu }^{\Gamma } \\ 
& - {{h}^{\Lambda \Sigma }}\left( {{\partial }^{\alpha }}{{\partial }_{\gamma }}{{h}_{\Lambda \Gamma }} \right)\mathcal{F}_{\alpha \mu }^{\Gamma }+{{h}^{\Lambda \Sigma }}\left( {{\partial }^{\alpha }}{{\partial }_{\gamma }}{{k}_{\Lambda \Gamma }} \right)\widetilde{\mathcal{F}}_{\alpha \mu }^{\Gamma } ~ .  
\end{split}
\end{equation}
In a similar way, we can obtain
\begin{equation}\label{dife2}
\Box{{D}_{\mu }}{{\phi }^{a}}=i{{q}_{\Gamma }}{{\partial }^{\alpha }}\mathcal{F}_{\mu \alpha }^{\Gamma }{{\phi }^{a}}+i{{q}_{\Gamma }}\mathcal{F}_{\mu \alpha }^{\Gamma }{{\partial }^{\alpha }}{{\phi }^{a}}+i{{q}_{\Gamma }}{{\partial }^{\alpha }}\left( A_{\alpha }^{\Gamma }{{\partial }_{\mu }}{{\phi }^{a}} \right)-i{{q}_{\Gamma }}{{\partial }^{\alpha }}\left( A_{\mu }^{\Gamma }{{\partial }_{\alpha }}{{\phi }^{a}} \right)+{{\partial }_{\mu }}{{\partial }^{\alpha }}{{D}_{\alpha }}{{\phi }^{a}} ~ ,
\end{equation}
where $\Box \equiv -\partial_{t}^{2}+\partial_{1}^{2}+\partial_{2}^{2}+\partial_{3}^{2}$ is the d'Alembert operator. 

We translate the coordinate system to the lightcone coordinate system centered at $p$. Then, we can write the equations (\ref{dife1}) and (\ref{dife2}) as the vanishing of a surface integral over the interior of the past light cone $K_p$ from a certain point $p$ to the initial data surface. By using spherical means method, we can write the equations of MKG theory with general coupling in the integral form of its field strength as 
\begin{equation}\label{es1}
\begin{split}
\mathcal{F}_{\mu \gamma }^{\Sigma }&={\mathcal{F}_{\mu \gamma }}^{\Sigma |lin}+\frac{1}{4\pi }\int\limits_{{{K}_{p}}}{rdrd\Omega }\left( -i {{h}^{\Lambda \Sigma }}{{q}_{\Lambda }}\left\{ {{\partial }_{\mu }}{{g}_{a\bar{b}}}{{D}_{\gamma }}{{\phi }^{a}}{{{\bar{\phi }}}^{{\bar{b}}}} + {{g}_{a\bar{b}}}{{\partial }_{\mu }}\left( {{D}_{\gamma }}{{\phi }^{a}}{{{\bar{\phi }}}^{{\bar{b}}}} \right) \right\} \right. \\ 
& {{\left. \left. \text{      } + {{h}^{\Lambda \Sigma }}\left( {{\partial }^{\alpha }}{{h}_{\Lambda \Gamma }} \right){{\partial }_{\alpha }}\mathcal{F}_{\gamma \mu }^{\Gamma } - {{h}^{\Lambda \Sigma }}\left( {{\partial }^{\alpha }}{{\partial }_{\gamma }}{{h}_{\Lambda \Gamma }} \right)\mathcal{F}_{\alpha \mu }^{\Gamma } + {{h}^{\Lambda \Sigma }}\left( {{\partial }^{\alpha }}{{\partial }_{\gamma }}{{k}_{\Lambda \Gamma }} \right)\widetilde{\mathcal{F}}_{\alpha \mu }^{\Gamma } \right) \right|}_{t=-r}} ~ , \\ 
\end{split}
\end{equation}
\begin{equation}\label{es2}
\begin{split}
{{D}_{\mu }}{{\phi }^{a}}&={{D}_{\mu }}{{\phi }^{a}}^{|lin}+ \frac{1}{4\pi }\int\limits_{{{K}_{p}}}{rdrd\Omega } \left( i{{q}_{\Gamma }}{{\partial }^{\alpha }}\mathcal{F}_{\mu \alpha }^{\Gamma }{{\phi }^{a}}+i{{q}_{\Gamma }}\mathcal{F}_{\mu \alpha }^{\Gamma }{{\partial }^{\alpha }}{{\phi }^{a}} \right. \\ 
&{{\left. \left. +i{{q}_{\Gamma }}{{\partial }^{\alpha }}\left( A_{\alpha }^{\Gamma }{{\partial }_{\mu }}{{\phi }^{a}} \right)-i{{q}_{\Gamma }}{{\partial }^{\alpha }}\left( A_{\mu }^{\Gamma }{{\partial }_{\alpha }}{{\phi }^{a}} \right) + {{\partial }_{\mu }}{{\partial }^{\alpha }}{{D}_{\alpha }}{{\phi }^{a}} \right) \right|}_{t=-r}} ~ ,\\ 
\end{split}
\end{equation}
with
\begin{eqnarray}
{\mathcal{F}_{\mu \gamma }}^{\Sigma |lin}=\frac{1}{4\pi }{{\int\limits_{{{S}^{2}}}{d\Omega \left[ {{r}_{0}}\frac{\partial \left\{ \mathcal{F}_{\mu \gamma }^{\Sigma } \right\}}{\partial t}+{{r}_{0}}\frac{\partial \left\{ \mathcal{F}_{\mu \gamma }^{\Sigma } \right\}}{\partial r}+\mathcal{F}_{\mu \gamma }^{\Sigma } \right]}}_{t={{t}_{0}},r={{r}_{0}}}},\\
{{D}_{\mu }}{{\phi }^{a|lin}}=\frac{1}{4\pi }{{\int\limits_{{{S}^{2}}}{d\Omega \left[ {{r}_{0}}\frac{\partial \left\{ {{D}_{\mu }}{{\phi }^{a}} \right\}}{\partial t}+{{r}_{0}}\frac{\partial \left\{ {{D}_{\mu }}{{\phi }^{a}} \right\}}{\partial r}+{{D}_{\mu }}{{\phi }^{a}} \right]}}_{t={{t}_{0}},r={{r}_{0}}}}.
\end{eqnarray}

The energy-momentum tensor of this system is given by
\begin{equation}
{{T}^{\mu \nu }}=\frac{{{h}_{\Lambda \Sigma }}}{2}\left( \mathcal{F}_{\gamma }^{\Lambda |\mu }{{\mathcal{F}}^{\Sigma |\nu \gamma }}+\mathcal{F}_{\gamma }^{\Lambda |\mu }{{{\tilde{\mathcal{F}}}}^{\Sigma |\nu \gamma }} \right)+2{{g}_{a\bar{b}}}{{D}^{\mu }}{{\phi }^{a}}\overline{{{D}^{\nu }}{{\phi }^{b}}}-{{\eta }^{\mu \nu }}{{g}_{a\bar{b}}}{{D}_{\gamma }}{{\phi }^{a}}\overline{{{D}^{\gamma }}{{\phi }^{b}}}-{{\eta }^{\mu \nu }}V ~ .
\end{equation}
It is common to split the field strength $F^{\mu\nu}$ into electric and magnetic component as
\begin{equation}
{{E}^{\Sigma |i}}={{\mathcal{F}}^{\Sigma |0i}},\hspace{1cm}{{H}^{\Sigma |i}}={{\tilde{\mathcal{F}}}^{\Sigma |0i}}=\frac{1}{2}{{\varepsilon }^{ijk}}\mathcal{F}_{jk}^{\Sigma }.
\end{equation}
Thus, we can write the energy function as 
\begin{equation}\label{energy}
{{\mathcal{E}}_{0}}={{\int\limits_{{{B}_{p}}}{{{r}^{2}}drd\Omega \left. \left( \frac{{{h}_{\Lambda \Sigma }}}{2}\left( E_{i}^{\Lambda }{{E}^{\Sigma |i}}+H_{i}^{\Lambda }{{H}^{\Sigma |i}} \right)+{{g}_{a\bar{b}}}{{D}^{0}}{{\phi }^{a}}\overline{{{D}^{0}}{{\phi }^{b}}}+{{g}_{a\bar{b}}}{{D}_{i}}{{\phi }^{a}}\overline{{{D}^{i}}{{\phi }^{b}}}+V \right) \right|}}_{t={{t}_{0}}}},
\end{equation}
where $B_p$ represents a solid sphere in the lightcone coordinate that intersect with $K_p$.

Throughout this paper, we  set the Lagrangian (\ref{langrange}) to be invariant with respect to the local transformation $U(1)^N $ such that we have
\begin{equation}
\begin{split}
{{h}_{\Lambda \Sigma }}\left( U\phi ,\overline{U\phi } \right)& ={{h}_{\Lambda \Sigma }}\left( \phi ,\bar{\phi } \right) ~ , \\ 
{{k}_{\Lambda \Sigma }}\left( U\phi ,\overline{U\phi } \right)&={{k}_{\Lambda \Sigma }}\left( \phi ,\bar{\phi } \right) ~ , \\ 
\text{  }V\left( U\phi ,\overline{U\phi } \right)&=V\left( \phi ,\bar{\phi }\right) ~ . \\ 
\end{split}
\end{equation}
As the result of the gauge ambiguity, we have the freedom to choose the appropriate gauge conditions. In particular, we choose the temporal gauge condition
\begin{equation}\label{temporal}
A_{0}^{\Sigma }\left( x \right)=0 ~ ,
\end{equation}
for all $\Sigma$ which has been shown in \cite{akbar2} that the solutions of \eqref{eom1} and \eqref{eom2} satisfy \eqref{temporal} for all time. We take 
\begin{assumption}
\label{GaugeCouplingAssumptions}
\begin{equation}
\begin{split}
{{h}_{\Lambda \Sigma }}\left( \phi ,\bar{\phi } \right)& ={{h}_{\Lambda \Sigma }}\left( \left| \phi  \right|^2 \right) ~ , \\ 
{{k}_{\Lambda \Sigma }}\left( \phi ,\bar{\phi } \right)&={{k}_{\Lambda \Sigma }}\left( \left| \phi  \right|^2  \right) ~ , 
\end{split}
\end{equation}
with $ \left| \phi  \right|^2 \equiv {{\delta }_{a\bar{b}}}{{\phi }^{a}}{{\bar{\phi} }^{{\bar{b}}}}$, and both ${{h}_{\Lambda \Sigma }}\left( \left| \phi  \right|^2  \right)$ and ${{k}_{\Lambda \Sigma }}\left( \left| \phi  \right|^2 \right)$ are bounded functions for all $\Lambda , \Sigma$.
\end{assumption}



\section{The Internal Manifold}
\label{sec:TheInternalManifold}

 This section is devoted to discuss some properties of the internal scalar manifold which has to be K\"{a}hler  mentioned in section \ref{sec:Introduction}. In particular, we consider the case of the K\"{a}hler potential to be bounded above a $U(1)^N$ symmetric function. Our estimates derived in this section play an important role in  proving the global existence of this system.

First of all, let $K$ be a K\"{a}hler potential satisfying the following condition
\begin{equation}\label{int}
K\left( \phi ,\bar{\phi } \right)\le \Phi \left( \left| \phi  \right| \right) ~ ,
\end{equation}
with
\begin{equation}
\left| \phi  \right|={{\left( {{\delta }_{a\bar{b}}}{{\phi }^{a}}{\bar{\phi }^{{\bar{b}}}} \right)}^{\frac{1}{2}}} ~ .
\end{equation}
Then, we have the following lemma
\begin{lemma}
	\label{scalarManifoldLemma}
	Suppose ${\mathcal{M}}$ is a  K\"{a}hler manifold satisfying \eqref{int}. Let $\Phi$ satisfied the   inequality
	\begin{equation}\label{syarat1}
	\left| \frac{{{Q}'}}{2\left| \phi  \right|} \right|\le \sum\limits_{n=0}^{N}{{{b}_{n}}{{\left| \phi  \right|}^{n}}} ~ ,
	\end{equation} 
	where $b_n > 0$ for all $n$, $Q\left( \left| \phi  \right| \right)=\frac{1}{4\left| \phi  \right|}\left( \Phi ''-\frac{\Phi '}{\left| \phi  \right|} \right)$, and $\Phi '=\frac{\partial \Phi }{\partial \left| \phi  \right|}$. Then, we have 
	\begin{equation}\label{eq:Kahlerineq}
	\begin{split}
	\left| K \right|&\le \sum\limits_{n=0}^{N}{\frac{8{{b}_{n}}}{\left( n+4 \right)\left( n+5 \right)\left( n+6 \right)}}{{\left| \phi  \right|}^{n+6}}+\sum\limits_{n=0}^{N}{\frac{12{{b}_{n}}}{\left( n+2 \right)\left( n+3 \right)\left( n+4 \right)}}{{\left| \phi  \right|}^{n+4}}\\
	&+2{{C}_{1}}{{\left| \phi  \right|}^{3}}+{{C}_{2}}\frac{{{\left| \phi  \right|}^{2}}}{2}+{{C}_3} ~ ,
	\end{split}
	\end{equation}
with $C_i \ge 0$ for all $i=1,2,3$. 
\end{lemma}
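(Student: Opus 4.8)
The plan is to reduce the estimate on $|K|$ to a one-dimensional problem for the radial function $\Phi$, and then to obtain the polynomial bound on $\Phi$ by integrating the hypothesis \eqref{syarat1} three times. For the upper bound on $K$ the reduction is immediate: \eqref{int} gives $K\le\Phi(|\phi|)$, so it suffices to bound $\Phi(|\phi|)$ above by the right-hand side of \eqref{eq:Kahlerineq}. For the lower bound I would use that $\mathcal{M}$ is K\"ahler, so that $g_{a\bar b}=\partial_{a}\partial_{\bar b}K$ is positive definite and $K$ is strictly plurisubharmonic; this, together with the $U(1)^{N}$ isometry, forces the infimum of $K$ to be attained at the origin, so $K(\phi,\bar\phi)\ge K(0,0)$, a fixed constant. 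Hence $|K|$ can exceed the claimed polynomial only through $\Phi(|\phi|)$ from above or through $|K(0,0)|$ from below, and it is enough to take the constant term $C_{3}$ at least as large as $|K(0,0)|$, on top of whatever the $\Phi$-estimate produces, and then to bound $\Phi$.

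To bound $\Phi$, I would first record the identity $\tfrac{d}{d|\phi|}\big(\Phi'/|\phi|\big)=4Q$, which is merely a rewriting of the definition of $Q$, and note the regularity at the origin --- namely $\Phi'(0)=0$ and finiteness of $\Phi(0)$, $\Phi''(0)$ and $Q(0)$, forced by $\Phi(|\phi|)$ being a smooth $U(1)^{N}$-invariant potential --- which pins down the integration constants. From \eqref{syarat1} we have $|Q'(r)|\le 2r\sum_{n}b_{n}r^{n}$; integrating once gives $|Q(r)|\le|Q(0)|+\sum_{n}\tfrac{2b_{n}}{n+2}r^{n+2}$; integrating $4Q$ and using the identity bounds $|\Phi'/r|$, and multiplying by $r$ and integrating once more gives a bound of the schematic form
\begin{equation*}
|\Phi(r)|\ \le\ \sum_{n=0}^{N}\frac{8\,b_{n}}{(n+2)(n+3)(n+5)}\,r^{n+5}\ +\ 2C_{1}r^{3}+C_{2}\frac{r^{2}}{2}+C_{3},
\end{equation*}
with $C_{1},C_{2},C_{3}\ge 0$ collecting the contributions of $|Q(0)|$, $|\Phi''(0)|$, $|\Phi(0)|$ and $|K(0,0)|$. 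Finally I would repackage each monomial through Young's inequality, $r^{n+5}=r^{n+4}\cdot r\le\lambda_{n}r^{n+6}+\tfrac{1}{4\lambda_{n}}r^{n+4}$ with $\lambda_{n}=\tfrac{n+4}{6(n+5)}$, and then enlarge the coefficients; this reproduces precisely the two sums in \eqref{eq:Kahlerineq}, the choice of $\lambda_{n}$ matching the $r^{n+4}$-coefficient exactly and leaving the $r^{n+6}$-coefficient with room to spare.

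The routine but genuine work lies in the triple integration: carrying the integration constants correctly and justifying the behaviour at $|\phi|=0$, and then checking that the repackaging really produces coefficients no larger than the stated ones for every $n\ge 0$ --- a short inequality, $(n+4)^{2}(n+6)\le 6(n+2)(n+3)(n+5)$, tightest at $n=0$. I expect the only conceptual point to be the lower bound on $K$: it is positivity of the K\"ahler metric, equivalently plurisubharmonicity of $K$ together with the $U(1)^{N}$ symmetry, that keeps $K$ from running off to $-\infty$, and without it one could bound $K$ from above but not $|K|$.
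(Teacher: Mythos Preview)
Your approach is essentially the paper's: integrate the hypothesis \eqref{syarat1} repeatedly to bound $\Phi$, then invoke \eqref{int}. The paper's proof is extremely terse---it records only one intermediate step, $|Q|\le \sum \tfrac{b_n}{n+1}|\phi|^{n+1}+C_1$, and then jumps to the final bound on $|\Phi|$---so your write-up is in fact more detailed than the original.

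Two points of comparison. First, note that the paper's own proof redefines $Q$ with $\tfrac{1}{4|\phi|^2}$ in the denominator (rather than the $\tfrac{1}{4|\phi|}$ in the statement), which changes the identity to $(\Phi'/r)'=4rQ$ and shifts some exponents; this, together with the unexplained jump to the two-sum form, is why you land on a single $r^{n+5}$ sum and need the Young/AM--GM splitting to reproduce the stated $r^{n+6}$ and $r^{n+4}$ terms. Your repackaging is correct and the coefficient check $(n+4)^2(n+6)\le 6(n+2)(n+3)(n+5)$ goes through, so this is a legitimate route to exactly the claimed inequality; the paper simply asserts that form without the intermediate step. Second, your treatment of the lower bound on $K$ (via positivity of the K\"ahler metric and the $U(1)^N$ symmetry forcing a minimum at the origin) is an addition: the paper's proof only says ``applying \eqref{int}, we get \eqref{eq:Kahlerineq}'', which literally gives $K\le\Phi$ but not a two-sided bound. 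Your observation that one must also control $K$ from below to bound $|K|$ is a genuine point the paper glosses over.
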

\begin{proof}
	If $\tilde{\mathcal{M}}$ is a K\"ahler manifold generated by $\Phi$, then we can write the metric ${{\tilde{g}}_{a\bar{b}}}={{\partial }_{a}}{{\partial }_{{\bar{b}}}}\Phi$ as
	\begin{equation}\label{gab}
	{{\tilde{g}}_{a\bar{b}}}=\frac{\Phi '}{2\left| \phi  \right|}{{\delta }_{a\bar{b}}}+\frac{1}{4\left| \phi  \right|}\left( \Phi ''-\frac{\Phi '}{\left| \phi  \right|} \right){{\delta }_{c\bar{b}}}{{\delta }_{a\bar{c}}}{{\phi }^{c}}{{\phi }^{{\bar{c}}}}
	\end{equation}
	with $\Phi '=\frac{\partial \Phi }{\partial \left| \phi  \right|}$. The inverse and the first derivative of ${{\tilde{g}}_{a\bar{b}}}$ can be written down
	\begin{equation*}
	{{\tilde{g}}^{a\bar{b}}}=\frac{2\left| \phi  \right|}{{{\Phi }'}}{{\delta }^{a\bar{b}}}-\frac{2}{{\Phi }'\left| \phi  \right|}\left( \frac{{\Phi }''-\frac{{{\Phi }'}}{\left| \phi  \right|}}{{\Phi }''+\frac{{{\Phi }'}}{\left| \phi  \right|}} \right){{\phi }^{a}}{{\phi }^{{\bar{b}}}} ~ ,
	\end{equation*}
	\begin{equation*}
	{{\partial }_{c}}{{\tilde{g}}_{a\bar{b}}}=Q\left( {{\delta }_{a\bar{b}}}{{\delta }_{c\bar{d}}}+{{\delta }_{c\bar{b}}}{{\delta }_{a\bar{d}}} \right){{\bar{\phi }}^{{\bar{d}}}}+\frac{Q'}{2\left| \phi  \right|}\left( {{\delta }_{c\bar{f}}}{{\delta }_{e\bar{b}}}{{\delta }_{a\bar{d}}} \right){{\bar{\phi }}^{{\bar{d}}}}{{\phi }^{e}}{{\bar{\phi }}^{{\bar{f}}}} ~ ,
	\end{equation*}
	respectively, where
	\begin{equation*}
	Q\left( \left| \phi  \right| \right)=\frac{1}{4{{\left| \phi  \right|}^{2}}}\left( \Phi ''-\frac{\Phi '}{\left| \phi  \right|} \right) ~ ,
	\end{equation*}
	\begin{equation*}
	\frac{Q'}{2\left| \phi  \right|}=\frac{1}{8{{\left| \phi  \right|}^{3}}}\left( \Phi '''-\frac{3\Phi ''}{\left| \phi  \right|}+\frac{3\Phi '}{{{\left| \phi  \right|}^{2}}} \right) ~ .
	\end{equation*}
	Hence, using the assumption in (\ref{syarat1}) and the integral inequality properties
	\begin{equation*}
	\left| \int{f\left( x \right)dx} \right|\le \int{\left| f\left( x \right) \right|dx} ~ ,
	\end{equation*}
	we obtain
	\begin{equation*}
	\left| Q \right|\le \sum\limits_{n=1}^{N}{\frac{{{b}_{n}}}{n+1}{{\left| \phi  \right|}^{n+1}}}+{{C}_{1}} ~ ,
	\end{equation*}
which implies
	\begin{equation*}
	\begin{split}
	\left| \Phi \right|&\le\sum\limits_{n=0}^{N}{\frac{8{{b}_{n}}}{\left( n+4 \right)\left( n+5 \right)\left( n+6 \right)}}{{\left| \phi  \right|}^{n+6}}+\sum\limits_{n=0}^{N}{\frac{12{{b}_{n}}}{\left( n+2 \right)\left( n+3 \right)\left( n+4 \right)}}{{\left| \phi  \right|}^{n+4}}\\
	&+2{{C}_{1}}{{\left| \phi  \right|}^{3}}+{{C}_{2}}\frac{{{\left| \phi  \right|}^{2}}}{2}+{{C}_3} ~ ,
	\end{split}
	\end{equation*}
with $C_i \ge 0$ for all $i=1,2,3$. Thus, applying  (\ref{int}), we get  \eqref{eq:Kahlerineq}.
\end{proof}
By substituting equation (\ref{gab}) into (\ref{energy}), we can express the energy of the system in the form of
\begin{equation}\label{energi_gab}
\begin{split}
{{\mathcal{E}}_{0}}&=\int\limits_{{{B}_{p}}}{{{r}^{2}}drd\Omega \left( \frac{{{h}_{\Lambda \Sigma }}}{2}\left( E_{i}^{\Lambda }{{E}^{\Sigma |i}}+H_{i}^{\Lambda }{{H}^{\Sigma |i}} \right)+\frac{\Phi '}{2\left| \phi  \right|}{{\delta }_{a\bar{b}}}{{D}_{\mu }}{{\phi }^{a}}\overline{{{D}^{\mu }}{{\phi }^{b}}} \right.} \\ 
& {{\left. +\frac{1}{4\left| \phi  \right|}\left( \Phi ''-\frac{\Phi '}{\left| \phi  \right|} \right){{\delta }_{c\bar{b}}}{{\delta }_{a\bar{c}}}{{\phi }^{c}}{{\phi }^{{\bar{c}}}}{{D}_{\mu }}{{\phi }^{a}}\overline{{{D}^{\mu }}{{\phi }^{b}}}+V \right)}_{t={{t}_{0}}}} ~ .  
\end{split}
\end{equation}
Since the energy of the system must be positive, then we should take
\begin{assumption}
\label{boundpotAssumptions}
The function $\Phi$ has to be bounded below by
\begin{equation}\label{eq:energi_gabcon}
\left| \Phi  \right|\ge \frac{{{c}_{1}}}{2}{{\left| \phi  \right|}^{2}}+{{c}_{2}} ~ ,
\end{equation}
where $c_1 \ge 0$ and $c_2 \ge 0$.
\end{assumption}

\section{Estimates}
\label{sec:Estimates}

In this section, we derive some estimates for the complex scalar fields and the gauge fields which are the significant part of the global existence proof.

\subsection{The Flat Energy Estimate}
We define a flat energy functional as
\begin{equation}\label{modE}
{\mathcal{J}}\left( t \right) \equiv {{\left\| E \right\|}_{{{L}^{2}}}}+{{\left\| H \right\|}_{{{L}^{2}}}}+\frac{{{c}_{1}}}{2}{{\left\| D\phi  \right\|}_{{{L}^{2}}}}+{{\left\| \phi  \right\|}_{{{L}^{2}}}}+{{\left\| V \right\|}_{{{L}^{2}}}} ~ ,
\end{equation}
where $c_1 \ge 0$ and
\begin{equation}
{{\left\| f \right\|}_{{{L}^{p}}}}\equiv {{\left( \int\limits_{S}{{{\left| f \right|}^{p}}d\mu } \right)}^{{}^{1}/{}_{p}}} ~ ,
\end{equation}
is a standard $L^p$ norm. The flat energy functional  (\ref{modE}) plays an important role for bounding some estimates in the MKG system. 
\begin{preposition}
	Let ${{\phi }^{a}},E_{i}^{\Lambda },H_{i}^{\Lambda }$ be  solutions of MKG system as in (\ref{eom1}) and (\ref{eom2}) in temporal gauge (\ref{temporal}) with
	\begin{equation}
	{\mathcal{J}_{0}} \equiv {\mathcal{J}}\left( 0 \right)<\infty ~ .	
	\end{equation} 
	Then, for all $t\ge 0$ there exists a positive constant ${{C}_{N}} > 1$ such that
	\begin{equation}\label{preposisi}
	{\mathcal{J}}\left( t \right)\le {{C}_{N}}{\mathcal{J}_{0}}\left( 1+t \right) ~ .
	\end{equation}
	\begin{proof}
		Clearly that ${{\mathcal{E}}_{0}}\le \mathcal{J}_{0}^{2}$. So, the first, second, and last terms of (\ref{modE}) are bounded by $\mathcal{E}_{0}^{1/2}$, that is,
		\begin{equation}
		{{\left\| E \right\|}_{{{L}^{2}}}}+{{\left\| H \right\|}_{{{L}^{2}}}}+{{\left\| V \right\|}_{{{L}^{2}}}}\le \mathcal{E}_{0}^{1/2}\le C{\mathcal{J}_{0}} ~ ,
		\end{equation}
where $C \ge 1$, and we have defined $\left| E \right| \equiv {{\left( E_{i}^{\Lambda }{{E}^{\Sigma |i}} \right)}^{1/2}}$ and $\left| H \right| \equiv {{\left( H_{i}^{\Lambda }{{H}^{\Sigma |i}} \right)}^{1/2}}$.
		Next, from (\ref{eq:energi_gabcon}), we obtain
		\begin{equation}
		\left| \frac{\Phi '}{2\left| \phi  \right|} \right|\ge \frac{{{c}_{1}}}{2} ~ ,
		\end{equation}
		such that the estimate of the third term in (\ref{modE}) has the form
		\begin{equation}
		\frac{{{c}_{1}}}{2}{{\left\| D\phi  \right\|}_{{{L}^{2}}}}\le {\mathcal{J}_{0}} ~ ,
		\end{equation}
		with $\left| D\phi  \right| \equiv {{\left( {{\delta }_{a\bar{b}}}{{D}_{\mu }}{{\phi }^{a}}\overline{{{D}^{\mu }}{{\phi }^{b}}} \right)}^{1/2}}$.
		To get the estimate of the fourth term in (\ref{modE}), let us consider
		\begin{equation}
		\begin{split}
		\frac{\partial }{\partial t}\int\limits_{{{K}_{p}}}{{{r}^{2}}drd\Omega }{{\left| \phi  \right|}^{2}}& \text{ = 2 Re}\int\limits_{{{K}_{p}}}{{{r}^{2}}drd\Omega }\left| \phi {{D}_{0}}\phi  \right| \\ 
		& \le \text{2}\left\| \phi  \right\|_{{{L}^{2}}}^{{}}\left\| {{D}_{0}}\phi  \right\|_{{{L}^{2}}}^{{}}\text{ }\le \text{2}{\mathcal{J}_{0}}\left\| \phi  \right\|_{{{L}^{2}}} ~ .  \\ 
		\end{split}
		\end{equation}
		Integrating the inequality, we obtain
		\begin{equation}
		{{\left\| \phi  \right\|}_{{{L}^{2}}}}\le \tilde{C} {\mathcal{J}_{0}}\left( 1+t \right) ~ ,
		\end{equation}
	with $ \tilde{C} > 0$	showing that (\ref{preposisi}) is fulfilled. 
	\end{proof} 
\end{preposition}

\subsection{Estimate for the gauge fields}
Let us rewrite the integral equation (\ref{es1}) as
\begin{equation}\label{eq:es1lg}
\mathcal{F}_{\mu \gamma }^{\Sigma } ={\mathcal{F}_{\mu \gamma }}^{\Sigma |lin} + I_{\mu \gamma }^{\Sigma } + J_{\mu \gamma }^{\Sigma } + K_{\mu \gamma }^{\Sigma } + L_{\mu \gamma }^{\Sigma } + M_{\mu \gamma }^{\Sigma } ~ ,
\end{equation}
where
\begin{equation}\label{eq:nonlinI}
 I_{\mu \gamma }^{\Sigma }  \equiv  - \frac{1}{4\pi} \int\limits_{{{K}_{p}}}{rdrd\Omega } {{\left. h^{\Lambda \Sigma } {{\partial }^{\alpha }}{{\partial }_{\gamma }}{{h}_{\Lambda \Gamma }}\mathcal{F}_{\alpha \mu }^{\Gamma } \right|}_{t=-r}} ~ ,
\end{equation}
\begin{equation}\label{eq:nonlinJ}
 J_{\mu \gamma }^{\Sigma }  \equiv   \frac{1}{4\pi} \int\limits_{{{K}_{p}}}{rdrd\Omega } {{\left. {{h}^{\Lambda \Sigma }}\left( {{\partial }^{\alpha }}{{h}_{\Lambda \Gamma }} \right){{\partial }_{\alpha }}\mathcal{F}_{\gamma \mu }^{\Gamma } \right|}_{t=-r}} ~ ,
\end{equation}
\begin{equation}\label{eq:nonlinK}
 K_{\mu \gamma }^{\Sigma }  \equiv   - \frac{i}{4\pi} \int\limits_{{{K}_{p}}}{rdrd\Omega } {{\left.{{h}^{\Lambda \Sigma }}{{q}_{\Lambda }} {{\partial }_{\mu }}{{g}_{a\bar{b}}}{{D}_{\gamma }}{{\phi }^{a}}{{{\bar{\phi }}}^{{\bar{b}}}} \right|}_{t=-r}} ~ ,
\end{equation}
\begin{equation}\label{eq:nonlinL}
 L_{\mu \gamma }^{\Sigma }  \equiv   - \frac{i}{4\pi}  \int\limits_{{{K}_{p}}}{rdrd\Omega } {{\left. {{h}^{\Lambda \Sigma }}{{q}_{\Lambda }} {{g}_{a\bar{b}}}{{\partial }_{\mu }}\left( {{D}_{\gamma }}{{\phi }^{a}}{{{\bar{\phi }}}^{{\bar{b}}}} \right)  \right|}_{t=-r}} ~ ,
\end{equation}
\begin{equation}\label{eq:nonlinM}
 M_{\mu \gamma }^{\Sigma }  \equiv   \frac{1}{4\pi} \int\limits_{{{K}_{p}}}{rdrd\Omega } {{\left. {{h}^{\Lambda \Sigma }}\left( {{\partial }^{\alpha }}{{\partial }_{\gamma }}{{k}_{\Lambda \Gamma }} \right)\widetilde{\mathcal{F}}_{\alpha \mu }^{\Gamma }  \right|}_{t=-r}} ~ .
\end{equation}

First, we consider the first term of the nonlinear part of the equation in (\ref{eq:es1lg}) which satisfies
\begin{equation}
\left| I_{\mu \gamma }^{\Sigma } \right| \le  \int\limits_{{{K}_{p}}}{rdrd\Omega }{{\left| {{h}^{\Lambda \Sigma }}{{\partial }^{\alpha }}{{\partial }_{\gamma }}{{h}_{\Lambda \Gamma }}\mathcal{F}_{\alpha \mu }^{\Gamma } \right|}_{t=-r}} ~ .
\end{equation}
 Using the Holder inequality, we have the following estimate
\begin{equation*}
\begin{split}
\left| I_{\mu\gamma}^{\Sigma } \right| & \le  {{\left( \int\limits_{{{K}_{p}}}{{{r}^{2}}drd\Omega \frac{{{\left| \mathcal{F}_{\alpha\mu }^{\Gamma } \right|}^{2}}}{{{r}^{2}}}} \right)}^{1/2}}{{\left( \int\limits_{{{K}_{p}}}{{{r}^{2}}drd\Omega {{\left| {{h}^{\Lambda \Sigma }}{{\partial }^{\alpha}}{{\partial }_{\gamma}}{{h}_{\Lambda \Gamma }} \right|}^{2}}} \right)}^{1/2}} \\ 
& \le  {{\left( \int\limits_{0}^{{{r}_{0}}}{dr}\left\|\mathcal{F}(-r) \right\|_{{{L}^{\infty }}}^{2} \right)}^{1/2}}{{\left\| {{h}^{\Lambda \Sigma }}{{\partial }^{\alpha}}{{\partial }_{\gamma }}{{h}_{\Lambda \Gamma }} \right\|}_{{{L}^{2}}}} ~ .\\ 
\end{split}
\end{equation*}
We have then
\begin{equation}\label{eq1}
{{\left\| {{h}^{\Lambda \Sigma }}{{\partial }^{\alpha }}{{\partial }_{\gamma}}{{h}_{\Lambda\Gamma }} \right\|}_{{{L}^{2}}}}\le c\left( {{\left\| {{\partial }^{\alpha}}\Psi \text{ }{{\partial }_{\gamma}}\Psi  \right\|}_{{{L}^{2}}}}+{{\left\| {{\partial }^{\alpha }}{{\partial }_{\gamma}}\Psi  \right\|}_{{{L}^{2}}}} \right) ~ ,
\end{equation}
where $c \ge 0$ and we have defined $\Psi  \equiv |\phi|^2 = {{\delta }_{a\bar{b}}}{{\phi }^{a}}{\bar{\phi }^{{\bar{b}}}}$.

Consider the following estimate
\begin{equation*}
\begin{split}
{{\left\| \phi  \right\|}_{{{L}^{3}}}}&\le \left\| \phi  \right\|_{{{L}^{2}}}^{1/2}\left\| \nabla \phi  \right\|_{{{L}^{2}}}^{1/2} \\ 
& \le {{c}_{N}}\mathcal{J}_{0}^{1/2}{{\left( 1+t \right)}^{1/2}}\left\| \nabla \phi  \right\|_{{{L}^{2}}}^{1/2} \\ 
& \le {{c}_{N}}\mathcal{J}_{0}^{1/2}{{\left( 1+t \right)}^{1/2}}{{\left( {{\left\| {{D}_{i}}\phi  \right\|}_{{{L}^{2}}}}+{{\left\| \nabla A_{i}^{{}} \right\|}_{{{L}^{2}}}}{{\left\| \phi  \right\|}_{{{L}^{3}}}} \right)}^{1/2}}\text{ } \\ 
&\le {{c}_{N}}\mathcal{J}_{0}^{{}}\left( 1+t \right){{\left( 1+{{\left\| \phi  \right\|}_{{{L}^{3}}}} \right)}^{1/2}} \\ 
&\le {{c}_{N}}\mathcal{J}_{0}^{{}}\left( 1+t \right)  ~ ,
\end{split}	
\end{equation*}
with $c_N > 1$. Using the results above, we can get the following estimate
\begin{equation*}
\begin{split}
{{\left\| \nabla \phi  \right\|}_{{{L}^{2}}}}&\le {{\left\| {{D}_{i}}\phi  \right\|}_{{{L}^{2}}}}+{{\left\| \nabla {{A}_{i}} \right\|}_{{{L}^{2}}}}{{\left\| \phi  \right\|}_{{{L}^{3}}}} \\ 
&\le {{\mathcal{J}}_{0}}+{{\mathcal{J}}_{0}}{{\left\| \phi  \right\|}_{{{L}^{3}}}} \\ 
&\le {{C}_{N}}\mathcal{J}_{0}^{2}\left( 1+t \right) ~ . \\ 
\end{split}
\end{equation*}
Thus, the first term of (\ref{eq1}) can be bound to the energy and its $L^\infty$ norm
\begin{equation}
\begin{split}
{{\left\| {{\partial }^{\alpha}}\Psi \text{ }{{\partial }_{\gamma }}\Psi  \right\|}_{{{L}^{2}}}}&\le {{\left\| {{\partial }^{\alpha}}\Psi \text{ } \right\|}_{{{L}^{\infty }}}}{{\left\| {{\partial }_{\gamma}}\Psi  \right\|}_{{{L}^{2}}}} \\ 
& \le \tilde{C}_N {{\left\| {{\partial }^{\alpha}}\Psi \text{ } \right\|}_{{{L}^{\infty }}}}{{\left\| \phi  \right\|}_{{{L}^{\infty }}}}{{\left\| {{\partial }_{\gamma}}\phi  \right\|}_{{{L}^{2}}}} \\
& \le \tilde{C}_N \mathcal{J}_0^2(1+t)\left\|\partial^\alpha \Psi\right\|_{L^\infty}\|\phi\|_{L^\infty} ~ ,
\end{split}
\end{equation}
with $\tilde{C}_N > 1$.

As for the second term ${{\left\| {{\partial }^{\alpha}}{{\partial }_{\gamma}}\Psi  \right\|}_{{{L}^{2}}}}$, we have 
\begin{equation}
{{\left\| {{\partial }^{\alpha}}{{\partial }_{\gamma}}\Psi  \right\|}_{{{L}^{2}}}} \le  {{C}_{N}}\mathcal{J}_{0}^{2}\left( 1+t \right){{\left\| \partial \phi  \right\|}_{{{L}^{\infty }}}}+{{\left\| \phi  \right\|}_{{{L}^{\infty }}}}{{\left\| {{\partial }^{\alpha}}{{\partial }_{\gamma}}\phi  \right\|}_{{{L}^{2}}}}   ~ ,
\end{equation}
where
\begin{equation}
{{\left\| {{\partial }^{\alpha}}{{\partial }_{\gamma}}{{\phi }} \right\|}_{{{L}^{2}}}}\le {{\left\| {{D}^{\alpha }}{{D}_{\alpha }}  \phi \right\|}_{{{L}^{2}}}} ~ .
\end{equation}
We can use the equation of motion in (\ref{eom2}) so that
\begin{equation}
\begin{split}\label{eq:eomineq}
{{\left\| {{D}^{\alpha }}{{D}_{\alpha }}\overline{{{\phi }^{b}}} \right\|}_{{{L}^{2}}}}&\le {{\left\| {{g}^{c\bar{b}}}\mathcal{F}_{\alpha \beta }^{\Lambda }{{\partial }_{c}}G_{\Lambda }^{\alpha \beta } \right\|}_{{{L}^{2}}}}+{{\left\| {{g}^{d\bar{b}}}{{\partial }_{d}}{{g}_{a\bar{c}}}{{D}_{\alpha }}{{\phi }^{a}}\overline{{{D}^{\alpha }}{{\phi }^{c}}} \right\|}_{{{L}^{2}}}}\\
&+{{\left\| {{g}^{d\bar{b}}}{{\partial }^{\alpha }}{{g}_{d\bar{c}}}\overline{{{D}_{\alpha }}{{\phi }^{c}}} \right\|}_{{{L}^{2}}}}+{{\left\| {{g}^{d\bar{b}}}{{\partial }_{d}}V \right\|}_{{{L}^{2}}}}  ~ .
\end{split}
\end{equation}
In order to have an estimate of \eqref{eq:eomineq}, we have to specify the form of  the scalar potential $V\left(\phi,\bar{\phi}\right)$: 
\begin{assumption}
\label{ScalarpotentialAssumptions}
 The scalar potential  $V\left(\phi,\bar{\phi}\right)$ has to be either of the following form
\begin{equation}\label{potential1}
V\left( \Psi  \right)=\sum\limits_{n=0}^{\tilde{N}}{{{a}_{n}}{{\Psi }^{n}}} ~ ,
\end{equation}
\begin{equation}\label{potential2}
V\left( \Psi \right)= V_0 \left( 1-\cos \lambda \Psi  \right) ~ ,
\end{equation}
\begin{equation}\label{potential3}
 V\left( \Psi  \right)=\sum\limits_{n=0}^{{\tilde{N}}}{{{{\tilde{a}}}_{n}}}{{e}^{-{{{\tilde{\lambda }}}_{n}}\Psi }} ~ ,
\end{equation}
where $a_n, \tilde{a}_n, V_0, \lambda$ are real constants, while $\tilde{\lambda}_n > 0$ for every $n$.
\end{assumption}

It is worth mentioning that there three known examples in the case of \eqref{potential1}, namely, for $\tilde{N} =1$ it corresponds to the mass term in the Klein-Gordon equation, while   for $\tilde{N} =2$ it describes  the ${{\phi }^{4}}$-theory.  Equations \eqref{potential2} and \eqref{potential3} correspond to  the sine-Gordon and the  Toda field theories, respectively.

Making use of the Holder inequality, equation (\ref{gab}), and its derivative, we have then
\begin{equation}
\begin{split}
{{\left\| {{D}^{\alpha }}{{D}_{\alpha }}\overline{{{\phi }^{b}}} \right\|}_{{{L}^{2}}}} &\le \tilde{c}_1 ~ {{\mathcal{J}}_{0}}\left( 1+{{\left\| \partial \phi  \right\|}_{{{L}^{\infty }}}} \right)\left( {{\left\| \phi  \right\|}_{{{L}^{\infty }}}}+\left\| \phi  \right\|_{{{L}^{\infty }}}^{3}+\sum\limits_{n=1}^{N}{\left\| \phi  \right\|_{{{L}^{\infty }}}^{n+2}}+\sum\limits_{n=1}^{N}{\left\| \phi  \right\|_{{{L}^{\infty }}}^{n+4}} \right) \\ 
& + \tilde{c}_2 ~ {{\mathcal{J}}_{0}}{{\left\| \phi  \right\|}_{{{L}^{\infty }}}}{{\left\| \partial \phi  \right\|}_{{{L}^{\infty }}}}+  \tilde{c}_3 ~ {{\mathcal{J}}_{0}} ~ {\mathcal{I}}\left( \left\| \phi  \right\| \right)  ~ ,
\end{split}
\end{equation}
with $\tilde{c}_i   > 1$ for all $i=1,2,3$, and
\begin{equation}
{\mathcal{I}} \left( \left\| \phi  \right\| \right) \le \begin{cases}
{\mathcal{O}} \left( \left\| \phi  \right\| \right) ~ , &\text{if $V$ is of the form \eqref{potential1}} ~ ,\\
{{\left\| \phi  \right\|}_{{{L}^{\infty }}}}{{\left\| \partial \phi  \right\|}_{{{L}^{\infty }}}}{{\mathcal{J}}_{0}} ~ , &\text{if $V$ is either of the form \eqref{potential2} or \eqref{potential3}} ~ ,
\end{cases}
\end{equation}
where
\begin{equation}
{\mathcal{O}} \left( \left\| \phi  \right\| \right)\le {{\left\| \phi  \right\|}_{{{L}^{\infty }}}}{{\left\| \partial \phi  \right\|}_{{{L}^{\infty }}}}\left( 1+{{\mathcal{J}}_{0}}\left( 1+t \right)\sum\limits_{n=1}^{N-2}{\left\| \phi  \right\|_{{{L}^{\infty }}}^{2n+1}} \right) ~ .
\end{equation}
Thus, we obtain the estimate for the first term of   (\ref{eq:es1lg})
\begin{equation}
\begin{split}
\left| I_{\mu \gamma }^{\Sigma } \right| &\le {{\mathcal{C}}_1 } \mathcal{J}_{0}^{2}\left( 1+t \right){{\left( \int\limits_{0}^{{{r}_{0}}}{dr}\left\| \mathcal{F}(-r) \right\|_{{{L}^{\infty }}}^{2} \right)}^{1/2}}\left\{ {{\left\| \partial \Psi  \right\|}_{{{L}^{\infty }}}}{{\left\| \phi  \right\|}_{{{L}^{\infty }}}} \right.+{{\left\| \partial \phi  \right\|}_{{{L}^{\infty }}}} \\ 
& +\left( {{\left\| \phi  \right\|}_{{{L}^{\infty }}}}+\left\| \phi  \right\|_{{{L}^{\infty }}}^{3}+\sum\limits_{n=1}^{N}{\left\| \phi  \right\|_{{{L}^{\infty }}}^{n+2}}+\sum\limits_{n=1}^{N}{\left\| \phi  \right\|_{{{L}^{\infty }}}^{n+4}} \right)\left( {{\left\| \partial \phi  \right\|}_{{{L}^{\infty }}}}+1 \right) \\ 
& \left. +\left\| \phi  \right\|_{{{L}^{\infty }}}^{2}{{\left\| \partial \phi  \right\|}_{{{L}^{\infty }}}}+{{\left\| \phi  \right\|}_{{{L}^{\infty }}}}\mathcal{I}\left( \left\| \phi  \right\| \right) \right\}  ~ ,
\end{split}
\end{equation}
where $ {{\mathcal{C}}_1} > 1$ and
\begin{equation}
{{\left\| ^{(4)}\mathcal{F}(t) \right\|}_{{{L}^{\infty }}}} \equiv \left| \mathcal{F}_{\alpha \beta }^{\Sigma }\mathcal{F}_{{}}^{\Sigma |\alpha \beta }\left( t \right) \right|_{L^\infty }^{1/2} ~ .
\end{equation}

For the second term of nonlinear part in  (\ref{eq:es1lg}), we have
\begin{equation}
\left| J_{\mu \gamma }^{\Sigma } \right| \le \int\limits_{{{K}_{p}}}{rdrd\Omega }{{\left| {{h}^{\Lambda \Sigma }}\left( {{\partial }^{\alpha }}{{h}_{\Lambda \Gamma }} \right){{\partial }_{\alpha }}\mathcal{F}_{\gamma \mu }^{\Gamma } \right|}_{t=-r}} \:. \\ 
\end{equation}
Making use of the Holder inequality again, we get
\begin{equation}
\left| J_{\mu \gamma }^{\Sigma } \right| \le  {{\left( \int\limits_{0}^{{{r}_{0}}}{dr}\left\| D\phi  \right\|_{{{L}^{\infty }}}^{2} \right)}^{1/2}}{{\left\| \phi  \right\|}_{{{L}^{\infty }}}}{{\left\| {{\partial }_{\alpha }}\mathcal{F}_{\gamma \mu }^{\Gamma } \right\|}_{{{L}^{2}}}}  \;,
\end{equation}
where 
\begin{equation} 
{{\left\| D\phi \right\|}_{{{L}^{\infty }}}} \equiv \left| \delta_{a\bar{b}}{{D}_{\alpha }}{{\phi }^{a}} \overline{{D}^{\alpha }{\phi }^{b}} \right|_{L^\infty }^{1/2}.
\end{equation}
We can use the equation of motion in (\ref{eom1}) to get estimate for ${{\left\| {{\partial }_{\alpha }}\mathcal{F}_{\gamma \mu }^{\Gamma } \right\|}_{{{L}^{2}}}}$, namely,
\begin{equation}\label{es eom 1}
{{\left\| {{\partial }_{\alpha }}\mathcal{F}_{\gamma \mu }^{\Gamma } \right\|}_{{{L}^{2}}}}\le {\mathcal{C}} {{\mathcal{J}}_{0}}\left( {{\left\| \partial \Psi  \right\|}_{{{L}^{\infty }}}}+{{\left\| \phi  \right\|}_{{{L}^{\infty }}}} \right) ~ ,
\end{equation}
with ${\mathcal{C}} > 1$. Thus, the estimate for the second term is given by
\begin{equation}
\left| J_{\mu \gamma }^{\Sigma } \right| \le {\mathcal{C}_{2}} {{\mathcal{J}}_{0}}{{\left( \int\limits_{0}^{{{r}_{0}}}{dr}\left\| D\phi  \right\|_{{{L}^{\infty }}}^{2} \right)}^{1/2}}\left( {{\left\| \phi  \right\|}_{{{L}^{\infty }}}}{{\left\| \partial \Psi  \right\|}_{{{L}^{\infty }}}}+\left\| \phi  \right\|_{{{L}^{\infty }}}^{2} \right) ~ ,
\end{equation}
with ${\mathcal{C}}_2 > 1$. 

Similarly, by employing some computation, we obtain the estimates for the third and fourth term of (\ref{eq:es1lg})
\begin{equation}
\left| K_{\mu \gamma }^{\Sigma } \right|\le {\mathcal{C}_{3}}\mathcal{J}_{0}^{2}\left( 1+t \right){{\left( \int\limits_{0}^{{{r}_{0}}}{dr}\left\| D\phi  \right\|_{{{L}^{\infty }}}^{2} \right)}^{1/2}}{{\left\| {{\partial }_{k}}{{g}_{a\bar{b}}} \right\|}_{{{L}^{\infty }}}}{{\left\| \phi  \right\|}_{{{L}^{\infty }}}} ~ ,
\end{equation}
\begin{equation}
\begin{split}
\left| L_{\mu \gamma }^{\Sigma } \right|&\le {\mathcal{C}_{4}}{{\mathcal{J}}_{0}}\left( 1+t \right){{\left( \int\limits_{0}^{{{r}_{0}}}{dr}\left\| \phi  \right\|_{{{L}^{\infty }}}^{2} \right)}^{1/2}}\left\{ \left( \sum\limits_{n=1}^{N}{\left\| \phi  \right\|_{{{L}^{\infty }}}^{n+5}}+\sum\limits_{n=1}^{N}{\left\| \phi  \right\|_{{{L}^{\infty }}}^{n+4}}+\sum\limits_{n=1}^{N}{\left\| \phi  \right\|_{{{L}^{\infty }}}^{n+3}}+\sum\limits_{n=1}^{N}{{{\left| \phi  \right|}^{n+2}}} \right. \right. \\ 
& \left. +\left\| \phi  \right\|_{{{L}^{\infty }}}^{2}+{{\left\| \phi  \right\|}_{{{L}^{\infty }}}}+1 \right)\times \left( 1+{{\left\| \partial \phi  \right\|}_{{{L}^{\infty }}}} \right)\left( {{\left\| \phi  \right\|}_{{{L}^{\infty }}}}+\left\| \phi  \right\|_{{{L}^{\infty }}}^{3}+\sum\limits_{n=1}^{N}{\left\| \phi  \right\|_{{{L}^{\infty }}}^{n+2}}+\sum\limits_{n=1}^{N}{\left\| \phi  \right\|_{{{L}^{\infty }}}^{n+4}} \right) \\ 
& +{{\left\| \phi  \right\|}_{{{L}^{\infty }}}}{{\left\| \partial \phi  \right\|}_{{{L}^{\infty }}}}+{{c}_{4}}\mathcal{I}\left( \left\| \phi  \right\| \right)+{{\left\| \phi  \right\|}_{{{L}^{\infty }}}}\left. +{{\left\| A \right\|}_{{{L}^{\infty }}}} \right\} \\ 
& +{{C}_{4}}{{\mathcal{J}}_{0}}\left( 1+t \right){{\left( \int\limits_{0}^{{{r}_{0}}}{dr}\left\| D\phi  \right\|_{{{L}^{\infty }}}^{2} \right)}^{1/2}}\left( \sum\limits_{n=1}^{N}{\left\| \phi  \right\|_{{{L}^{\infty }}}^{n+5}}+\sum\limits_{n=1}^{N}{\left\| \phi  \right\|_{{{L}^{\infty }}}^{n+4}}+\sum\limits_{n=1}^{N}{\left\| \phi  \right\|_{{{L}^{\infty }}}^{n+3}} \right. \\ 
& \left. +\sum\limits_{n=1}^{N}{{{\left| \phi  \right|}^{n+2}}+\left\| \phi  \right\|_{{{L}^{\infty }}}^{2}+{{\left\| \phi  \right\|}_{{{L}^{\infty }}}}+1} \right) ~ ,  
\end{split}
\end{equation}
respectively, with ${\mathcal{C}_{3}}, {\mathcal{C}_{4}} > 1$. It is important to notice the estimate of the fifth term $ M_{\mu \gamma }^{\Sigma }$ gives the same result as  the estimate of the first term $ I_{\mu \gamma }^{\Sigma }$. 

Next, we need to estimate the linear term of (\ref{es1}). The linear term can be estimated using the initial data
\begin{equation}
\left| {\mathcal{F}_{\mu \gamma }}^{\Sigma |lin} \right|\le \frac{1}{4\pi }{{\int\limits_{{{S}^{2}}}{d\Omega \left| {{r}_{0}}\frac{\partial \left\{ \mathcal{F}_{\mu \gamma }^{\Sigma } \right\}}{\partial t}+{{r}_{0}}\frac{\partial \left\{ \mathcal{F}_{\mu \gamma }^{\Sigma } \right\}}{\partial r}+F_{\mu \gamma }^{\Sigma } \right|}}_{t={{t}_{0}},r={{r}_{0}}}} \le {\tilde{C}_{0}}+{\tilde{C}_{1}}{{r}_{0}} ~ ,
\end{equation}
with $\tilde{C}_0, \tilde{C}_1 \ge 0$. Therefore, the estimate for the gauge field is given by
\begin{equation}\label{result es1}
\begin{split}
\left| \mathcal{F}_{\mu \gamma }^{\Sigma } \right| &\le {\tilde{C}_{0}}+{\tilde{C}_{1}}{{r}_{0}}+\mathcal{J}_{0}^{2}\left( 1+t \right)\left\{ {{\left( \int\limits_{0}^{{{r}_{0}}}{dr}\left\| \mathcal{F}(-r) \right\|_{{{L}^{\infty }}}^{2} \right)}^{1/2}}{\tilde{\mathcal{C}}_{1}} L\left( t \right) \right. \\ 
& \left. +{{\left( \int\limits_{0}^{{{r}_{0}}}{dr}\left\| D\phi  \right\|_{{{L}^{\infty }}}^{2} \right)}^{1/2}}{\tilde{\mathcal{C}}_{2}}M(t)+{{\left( \int\limits_{0}^{{{r}_{0}}}{dr}\left\| \phi  \right\|_{{{L}^{\infty }}}^{2} \right)}^{1/2}}{\tilde{\mathcal{C}}_{3}}N(t) \right\}  ~ ,
\end{split}
\end{equation}
where ${\tilde{\mathcal{C}}_{i }}  > 1$ for all $i=1,2,3$, and
\begin{eqnarray}
L(t) & = & {{\left\| \phi  \right\|}_{{{L}^{\infty }}}}\left( {{\left\| \phi  \right\|}_{{{L}^{\infty }}}}+\left\| \phi  \right\|_{{{L}^{\infty }}}^{3}+\sum\limits_{n=1}^{N}{\left\| \phi  \right\|_{{{L}^{\infty }}}^{n+2}}+\sum\limits_{n=1}^{N}{\left\| \phi  \right\|_{{{L}^{\infty }}}^{n+4}} \right)\left( {{\left\| \partial \phi  \right\|}_{{{L}^{\infty }}}}+1 \right) \nonumber\\ 
& & +{{\left\| \partial \Psi  \right\|}_{{{L}^{\infty }}}}{{\left\| \phi  \right\|}_{{{L}^{\infty }}}}+{{\left\| \partial \phi  \right\|}_{{{L}^{\infty }}}}+\left\| \phi  \right\|_{{{L}^{\infty }}}^{2}{{\left\| \partial \phi  \right\|}_{{{L}^{\infty }}}}+{{\left\| \phi  \right\|}_{{{L}^{\infty }}}}\mathcal{I}\left( \left\| \phi  \right\| \right) ~ ,\\
M(t) &=& {{\left\| \phi  \right\|}_{{{L}^{\infty }}}}{{\left\| \partial \Psi  \right\|}_{{{L}^{\infty }}}}+\left\| \phi  \right\|_{{{L}^{\infty }}}^{2}+\sum\limits_{n=1}^{N}{\frac{\left( n+2 \right)}{n+1}{{b}_{n}}\left\| \phi  \right\|_{{{L}^{\infty }}}^{n+3}}+{{C}_{1}}\left\| \phi  \right\|_{{{L}^{\infty }}}^{2}+\left\| \phi  \right\|_{{{L}^{\infty }}}^{2}\nonumber\\ 
& &+{{\left\| \phi  \right\|}_{{{L}^{\infty }}}}+\sum\limits_{n=1}^{N}{\left\| \phi  \right\|_{{{L}^{\infty }}}^{n+5}}+\sum\limits_{n=1}^{N}{\left\| \phi  \right\|_{{{L}^{\infty }}}^{n+4}}+\sum\limits_{n=1}^{N}{\left\| \phi  \right\|_{{{L}^{\infty }}}^{n+3}}+\sum\limits_{n=1}^{N}{{{\left| \phi  \right|}^{n+2}}}+1  ~ ,\\  
N(t) &=& \sum\limits_{n=1}^{N}{\left\| \phi  \right\|_{{{L}^{\infty }}}^{n+5}}+\sum\limits_{n=1}^{N}{\left\| \phi  \right\|_{{{L}^{\infty }}}^{n+4}}+\sum\limits_{n=1}^{N}{\left\| \phi  \right\|_{{{L}^{\infty }}}^{n+3}}+\sum\limits_{n=1}^{N}{{{\left| \phi  \right|}^{n+2}}}+\left\| \phi  \right\|_{{{L}^{\infty }}}^{2}+{{\left\| \phi  \right\|}_{{{L}^{\infty }}}}+1 \nonumber\\ 
&& \left( {{\left\| \phi  \right\|}_{{{L}^{\infty }}}}+\left\| \phi  \right\|_{{{L}^{\infty }}}^{3}+\sum\limits_{n=1}^{N}{\left\| \phi  \right\|_{{{L}^{\infty }}}^{n+2}}+\sum\limits_{n=1}^{N}{\left\| \phi  \right\|_{{{L}^{\infty }}}^{n+4}} \right)\left( {{\left\| \partial \phi  \right\|}_{{{L}^{\infty }}}}+1 \right) \nonumber\\ 
&& +{{\left\| \phi  \right\|}_{{{L}^{\infty }}}}{{\left\| \partial \phi  \right\|}_{{{L}^{\infty }}}}+{{c}_{4}}\mathcal{I}\left( \left\| \phi  \right\| \right) ~  .
\end{eqnarray}

\subsection{Estimate for the complex scalar fields}

Let us rewrite the integral equation (\ref{es2}) as
\begin{equation}
{{D}_{\mu }}{{\phi }^{a}} = {{D}_{\mu }}{{\phi }^{a}}^{|lin} + N_{\mu}^{a} + R_{\mu}^{a} ~ ,
\end{equation}
where $N_{\mu}^{a}$  and $R_{\mu}^{a}$  are the nonlinear part of (\ref{es2}) whose forms are defined as 
\begin{equation}\label{eq:nonlinN}
N_{\mu}^{a}  \equiv  \frac{i}{4\pi }\int\limits_{{{K}_{p}}}{rdrd\Omega } ~  q_{\Gamma } \left. \left(  \partial^{\alpha } {\mathcal{F}}_{\mu \alpha }^{\Gamma }{{\phi }^{a}}+ { \mathcal{F}}_{\mu \alpha }^{\Gamma }\partial^{\alpha }{\phi }^{a} + \partial^{\alpha } \left( A_{\alpha }^{\Gamma }\partial_{\mu } {\phi }^{a} \right) -  \partial^{\alpha } \left( A_{\mu }^{\Gamma } \partial_{\alpha } {\phi }^{a}  \right) \right) \right|_{t=-r}    ~ ,
\end{equation}
\begin{equation}\label{eq:nonlinR}
R_{\mu}^{a}  \equiv   \frac{1}{4\pi} \int\limits_{{{K}_{p}}}{rdrd\Omega } \left.  {{\partial }_{\mu }}{{\partial }^{\alpha }}{{D}_{\alpha }}{{\phi }^{a}}  \right|_{t=-r}  ~ ,
\end{equation}

By applying Holder inequality, Sobolev estimates, and the result in (\ref{es eom 1}), the  estimate of   $N_{\mu}^{a}$ has the form
\begin{equation}
\begin{split}
\left| N_{\mu }^{a} \right|&\le {{K}_{1}}{{\mathcal{J}}_{0}}{{\left( \int\limits_{0}^{{{r}_{0}}}{dr}\left\| \phi  \right\|_{{{L}^{\infty }}}^{2} \right)}^{1/2}}{{\left\| \phi  \right\|}_{{{L}^{\infty }}}}\left( 1+{{\left\| \partial \phi  \right\|}_{{{L}^{\infty }}}} \right) \\ 
& +{{K}_{2}}\mathcal{J}_{0}^{2}\left( 1+t \right){{\left( \int\limits_{0}^{{{r}_{0}}}{dr}\left\| ^{(4)}F(-r) \right\|_{{{L}^{\infty }}}^{2} \right)}^{1/2}} \\ 
& +{{K}_{3}}{{\mathcal{J}}_{0}}{{\left( \int\limits_{0}^{{{r}_{0}}}{dr}\left\| A \right\|_{{{L}^{\infty }}}^{2} \right)}^{1/2}}\Bigg\{ \left( {{\left\| \phi  \right\|}_{{{L}^{\infty }}}}+\left\| \phi  \right\|_{{{L}^{\infty }}}^{3}+\sum\limits_{n=1}^{N}{\left\| \phi  \right\|_{{{L}^{\infty }}}^{n+2}}+\sum\limits_{n=1}^{N}{\left\| \phi  \right\|_{{{L}^{\infty }}}^{n+4}} \right)\left( 1+{{\left\| \phi  \right\|}_{{{L}^{\infty }}}} \right)  \\ 
&  +{{\left\| \phi  \right\|}_{{{L}^{\infty }}}}{{\left\| \partial \phi  \right\|}_{{{L}^{\infty }}}}+{{c}_{4}}\mathcal{I}\left( \left\| \phi  \right\| \right) \Bigg\}+{{K}_{3}}{{\mathcal{J}}_{0}}{{\left( \int\limits_{0}^{{{r}_{0}}}{dr}\left\| \partial \phi  \right\|_{{{L}^{\infty }}}^{2} \right)}^{1/2}} ~ ,\\ 
\end{split}
\end{equation}
where $K_i > 1$ for all $i=1,2,3$. As for $R_{\mu}^{a}$ , we first consider
\begin{equation}
\left| R_{\mu }^{a} \right|\le  \int\limits_{{{K}_{p}}}{rdrd\Omega }{{\left| {{\partial }_{\mu }}\left( \overline{{{D}^{\alpha }}}{{D}_{\alpha }}{{\phi }^{a}} \right)-i{{q}_{\Gamma }}{{\partial }_{\mu }}\left( {{A}^{\Gamma |\alpha }}{{D}_{\alpha }}{{\phi }^{a}} \right) \right|}_{t=-r}}  ~ .
\end{equation}
To estimate this term, we must derive an estimate for the first derivative of the equation (\ref{eom2}), that is, 
\begin{equation}
\begin{split}
{{\partial }_{\mu }}\left( \overline{{{D}^{\alpha }}}{{D}_{\alpha }}{{\phi }^{a}} \right)&={{g}^{d\bar{b}}}\left( \frac{1}{4}{{\partial }_{\mu }}\mathcal{F}_{\alpha \beta }^{\Lambda }{{\partial }_{d}}G_{\Lambda }^{\alpha \beta }+\frac{1}{4}\mathcal{F}_{\alpha \beta }^{\Lambda }{{\partial }_{\mu }}{{\partial }_{d}}G_{\Lambda }^{\alpha \beta } \right.-{{\partial }_{\mu }}{{\partial }_{d}}{{g}_{a\bar{c}}}{{D}_{\alpha }}{{\phi }^{a}}\overline{{{D}^{\alpha }}{{\phi }^{c}}} \\ 
& -{{\partial }_{d}}{{g}_{a\bar{c}}}{{\partial }_{\mu }}{{D}_{\alpha }}{{\phi }^{a}}\overline{{{D}^{\alpha }}{{\phi }^{c}}}-{{\partial }_{d}}{{g}_{a\bar{c}}}{{D}_{\alpha }}{{\phi }^{a}}{{\partial }_{\mu }}\overline{{{D}^{\alpha }}{{\phi }^{c}}}-{{\partial }_{\mu }}{{\partial }^{\alpha }}{{g}_{d\bar{c}}}\overline{{{D}_{\alpha }}{{\phi }^{c}}} \\ 
& \left. -{{\partial }^{\alpha }}{{g}_{d\bar{c}}}{{\partial }_{\mu }}\overline{{{D}_{\alpha }}{{\phi }^{c}}}-{{\partial }_{\mu }}{{\partial }_{d}}V \right)+{{\partial }_{\mu }}{{g}^{d\bar{b}}}{{g}_{d\bar{b}}}\left( \overline{{{D}^{\alpha }}}{{D}_{\alpha }}{{\phi }^{a}} \right) ~ , \\  
\end{split}
\end{equation}
which can be bound by an expression involving the energy and $L^\infty$ norm. Hence, the estimate  of   $R_{\mu}^{a}$ is given by
\begin{equation}
\begin{split}
\left| R_{\mu }^{a} \right| &\le {{\mathcal{B}}_{1}}\mathcal{J}{{}_{0}}{{\left( \int\limits_{0}^{{{r}_{0}}}{dr}\left\| \partial \phi  \right\|_{{{L}^{\infty }}}^{2} \right)}^{1/2}}{{\left\{ \left( {{\left\| \phi  \right\|}_{{{L}^{\infty }}}}+\left\| \phi  \right\|_{{{L}^{\infty }}}^{3}+\sum\limits_{n=1}^{N}{\left\| \phi  \right\|_{{{L}^{\infty }}}^{n+2}}+\sum\limits_{n=1}^{N}{\left\| \phi  \right\|_{{{L}^{\infty }}}^{n+4}} \right) \right.}^{2}} \\ 
& \left. \times \left( 1+{{\left\| \partial \phi  \right\|}_{{{L}^{\infty }}}} \right)+{{\left\| \phi  \right\|}_{{{L}^{\infty }}}}{{\left\| \partial \phi  \right\|}_{{{L}^{\infty }}}}+{{c}_{4}}\mathcal{I}\left( \left\| \phi  \right\| \right) \right\} \\ 
& +{{\mathcal{B}}_{2}}{{\mathcal{J}}_{0}}{{\left( \int\limits_{0}^{{{r}_{0}}}{dr}\left\| ^{(4)}F(-r) \right\|_{{{L}^{\infty }}}^{2} \right)}^{1/2}}\left( \left\| \partial \phi  \right\|_{{{L}^{\infty }}}^{2}\left\| \phi \text{ } \right\|_{{{L}^{\infty }}}^{2}+{{\left\| \partial \phi  \right\|}_{{{L}^{\infty }}}}\left\| \phi \text{ } \right\|_{{{L}^{\infty }}}^{2} \right. \\ 
& \left. +{{\left\| \phi  \right\|}_{{{L}^{\infty }}}}+{{\left\| \partial \phi  \right\|}_{{{L}^{\infty }}}}+\left\| \phi \text{ } \right\|_{{{L}^{\infty }}}^{2} \right) \\ 
& +{{\mathcal{B}}_{3}}{{\mathcal{J}}_{0}}{{\left( \int\limits_{0}^{{{r}_{0}}}{dr}\left\| D\phi (-r) \right\|_{{{L}^{\infty }}}^{2} \right)}^{1/2}}\left\{ {{\left\| \partial \phi  \right\|}_{{{L}^{\infty }}}}\left( \sum\limits_{n=1}^{N}{\left\| \phi  \right\|_{{{L}^{\infty }}}^{n+2}}+\sum\limits_{n=1}^{N}{\left\| \phi  \right\|_{{{L}^{\infty }}}^{n+1}}+1 \right) \right. \\ 
& +{{\left\| \partial \phi  \right\|}_{{{L}^{\infty }}}}\left( {{\left\| \phi  \right\|}_{{{L}^{\infty }}}}+\left\| \phi  \right\|_{{{L}^{\infty }}}^{3}+\sum\limits_{n=1}^{N}{\left\| \phi  \right\|_{{{L}^{\infty }}}^{n+2}}+\sum\limits_{n=1}^{N}{\left\| \phi  \right\|_{{{L}^{\infty }}}^{n+4}} \right)\left( 1+{{\left\| \phi  \right\|}_{{{L}^{\infty }}}} \right) \\ 
& \left. +\mathcal{I}\left( \left\| \phi  \right\| \right)+{{\left\| \phi  \right\|}_{{{L}^{\infty }}}}+\left( 1+t \right){{\left\| A \right\|}_{{{L}^{\infty }}}} \right. \Bigg\} \\ 
& +{{\mathcal{B}}_{4}}{{\left( \int\limits_{0}^{{{r}_{0}}}{dr}\left\| \partial \phi  \right\|_{{{L}^{\infty }}}^{2} \right)}^{1/2}}\mathcal{H}\left( \left\| \phi  \right\| \right)\,
\end{split}
\end{equation}
where ${\mathcal B}_{\hat{i}} > 1$ for all $\hat{i} = 1, ...,4$,
\begin{equation}
\mathcal{H}\left( \left\| \phi  \right\| \right)\le\begin{cases}
\mathcal{D}\left( \left\| \phi  \right\| \right) ~, &\text{ if $V$ is of the form \eqref{potential1}} ~ ,\\
{{\mathcal{J}}_{0}}\left( \left\| \partial \Psi  \right\|_{{{L}^{\infty }}}^{2}+1 \right) ~ , & \text{ if $V$  is either of the form \eqref{potential2} or \eqref{potential3} } ~,
\end{cases}
\end{equation}
and
\begin{equation}
\mathcal{D}\left( {{\left\| \phi  \right\|}_{{{L}^{\infty }}}} \right)\le\sum\limits_{n=0}^{N-1}{\left\| \phi  \right\|_{{{L}^{\infty }}}^{2n}}+{{\mathcal{J}}_{0}}\left( 1+t \right){{\left\| {{\partial }_{c}}\Psi  \right\|}_{{{L}^{\infty }}}}\sum\limits_{n=0}^{N-2}{\left\| \phi  \right\|_{{{L}^{\infty }}}^{2n}}\:.
\end{equation}
In the same way as the gauge field case, the linear term of (\ref{es2}) is bounded by the initial data. So, the total estimate for the complex scalar field is given by
\begin{equation}\label{result es2}
\begin{split}
\left| {{D}_{\mu }}{{\phi }^{a}} \right|&\le {{\mathcal{K}}_{1}}{{\mathcal{J}}_{0}}{{\left( \int\limits_{0}^{{{r}_{0}}}{dr}\left\| D\phi (-r) \right\|_{{{L}^{\infty }}}^{2} \right)}^{1/2}}\mathcal{S}\left( t \right)+{{\mathcal{K}}_{2}}\mathcal{J}_{0}^{2}\left( 1+t \right){{\left( \int\limits_{0}^{{{r}_{0}}}{dr}\left\| ^{(4)}\mathcal{F}(-r) \right\|_{{{L}^{\infty }}}^{2} \right)}^{1/2}}\mathcal{X}\left( t \right) \\ 
& +{{\mathcal{K}}_{3}}{{\mathcal{J}}_{0}}{{\left( \int\limits_{0}^{{{r}_{0}}}{dr}\left\| \phi  \right\|_{{{L}^{\infty }}}^{2} \right)}^{1/2}}{{\left\| \phi  \right\|}_{{{L}^{\infty }}}}\left( 1+{{\left\| \partial \phi  \right\|}_{{{L}^{\infty }}}} \right)+{{\mathcal{K}}_{4}}{{\mathcal{J}}_{0}}{{\left( \int\limits_{0}^{{{r}_{0}}}{dr}\left\| A \right\|_{{{L}^{\infty }}}^{2} \right)}^{1/2}}\mathcal{U}\left( t \right) \\ 
& +{{\mathcal{K}}_{5}}\mathcal{J}{{}_{0}}{{\left( \int\limits_{0}^{{{r}_{0}}}{dr}\left\| \partial \phi  \right\|_{{{L}^{\infty }}}^{2} \right)}^{1/2}}\mathcal{W}\left( t \right)+{{k}_{0}}+{{k}_{1}}{{r}_{0}}  
\end{split}
\end{equation}
where ${\mathcal K}_{\tilde{i}} > 1$ for all $\tilde{i} =1,...,5$, and 
\begin{eqnarray}
\mathcal{S}\left( t \right)&=&{{\left\| \partial \phi  \right\|}_{{{L}^{\infty }}}}\left( \sum\limits_{n=1}^{N}{\left\| \phi  \right\|_{{{L}^{\infty }}}^{n+2}}+\sum\limits_{n=1}^{N}{\left\| \phi  \right\|_{{{L}^{\infty }}}^{n+1}}+1 \right)+\mathcal{I}\left( \left\| \phi  \right\| \right)+{{\left\| \phi  \right\|}_{{{L}^{\infty }}}}+\left( 1+t \right){{\left\| A \right\|}_{{{L}^{\infty }}}} \nonumber\\ 
& &\quad+{{\left\| \partial \phi  \right\|}_{{{L}^{\infty }}}}\left( {{\left\| \phi  \right\|}_{{{L}^{\infty }}}}+\left\| \phi  \right\|_{{{L}^{\infty }}}^{3}+\sum\limits_{n=1}^{N}{\left\| \phi  \right\|_{{{L}^{\infty }}}^{n+2}}+\sum\limits_{n=1}^{N}{\left\| \phi  \right\|_{{{L}^{\infty }}}^{n+4}} \right)\left( 1+{{\left\| \phi  \right\|}_{{{L}^{\infty }}}} \right) ~ ,\\
\mathcal{X}\left( t \right)& = &1+\left\| \partial \phi  \right\|_{{{L}^{\infty }}}^{2}\left\| \phi \text{ } \right\|_{{{L}^{\infty }}}^{2}+{{\left\| \partial \phi  \right\|}_{{{L}^{\infty }}}}\left\| \phi \text{ } \right\|_{{{L}^{\infty }}}^{2}+{{\left\| \phi  \right\|}_{{{L}^{\infty }}}}+{{\left\| \partial \phi  \right\|}_{{{L}^{\infty }}}}+\left\| \phi \text{ } \right\|_{{{L}^{\infty }}}^{2} ~ ,\\
\mathcal{U}\left( t \right )& = &\left( {{\left\| \phi  \right\|}_{{{L}^{\infty }}}}+\left\| \phi  \right\|_{{{L}^{\infty }}}^{3}+\sum\limits_{n=1}^{N}{\left\| \phi  \right\|_{{{L}^{\infty }}}^{n+2}}+\sum\limits_{n=1}^{N}{\left\| \phi  \right\|_{{{L}^{\infty }}}^{n+4}} \right)\left( 1+{{\left\| \phi  \right\|}_{{{L}^{\infty }}}} \right) \nonumber\\
& & \quad + {{\left\| \phi  \right\|}_{{{L}^{\infty }}}}{{\left\| \partial \phi  \right\|}_{{{L}^{\infty }}}}+{{c}_{4}}\mathcal{I}\left( \left\| \phi  \right\| \right) ~ ,\\
\mathcal{W}\left( t \right)&=&\left\{ \left( {{\left\| \phi  \right\|}_{{{L}^{\infty }}}}+\left\| \phi  \right\|_{{{L}^{\infty }}}^{3}+\sum\limits_{n=1}^{N}{\left\| \phi  \right\|_{{{L}^{\infty }}}^{n+2}}+\sum\limits_{n=1}^{N}{\left\| \phi  \right\|_{{{L}^{\infty }}}^{n+4}} \right) \right.\left. \left( 1+{{\left\| \partial \phi  \right\|}_{{{L}^{\infty }}}} \right)+{{\left\| \phi  \right\|}_{{{L}^{\infty }}}}{{\left\| \partial \phi  \right\|}_{{{L}^{\infty }}}} \right. \nonumber\\ 
& & \left.+{{c}_{4}}\mathcal{I}\left( \left\| \phi  \right\| \right)\right. \Bigg\}  \left( {{\left\| \phi  \right\|}_{{{L}^{\infty }}}}+\left\| \phi  \right\|_{{{L}^{\infty }}}^{3}+\sum\limits_{n=1}^{N}{\left\| \phi  \right\|_{{{L}^{\infty }}}^{n+2}}+\sum\limits_{n=1}^{N}{\left\| \phi  \right\|_{{{L}^{\infty }}}^{n+4}} \right)+\mathcal{H}\left( {{\left\| \phi  \right\|}_{{{L}^{\infty }}}} \right) ~ . \nonumber\\
\end{eqnarray}

\section{The Global Existence}
\label{sec:GlobalExistence}
In this section we put the final argument and prove the global existence of the MKG system with general coupling. 

Since the right hand side of \ref{result es1}) and (\ref{result es2}) independent of the spatial coordinates of $p$ and reversing the steps which shifted the origin of coordinates, we can write the result as
\begin{equation}
\begin{split}
\left\| ^{(4)}\mathcal{F}(t) \right\|_{{{L}^{\infty }}}&\le \mathcal{J}_{0}^{2}\left( 1+t \right)\left\{ {{\mathcal{C}}_{1}}L(t){{\left( \int\limits_{0}^{t}{ds}\left\| \mathcal{F}(s) \right\|_{{{L}^{\infty }}}^{2} \right)}^{1/2}}+{{\mathcal{C}}_{2}}M(t){{\left( \int\limits_{0}^{t}{ds}\left\| D\phi (s) \right\|_{{{L}^{\infty }}}^{2} \right)}^{1/2}} \right. \\ 
& \left. +{{\mathcal{C}}_{3}}N(t){{\left( \int\limits_{0}^{t}{ds}\left\| \phi (s) \right\|_{{{L}^{\infty }}}^{2} \right)}^{1/2}} \right\}+{{c}_{0}}+{{c}_{1}}t  ~ ,
\end{split}
\end{equation}
\begin{equation}
\begin{split}
\left\| D\phi  \right\|_{{{L}^{\infty }}}&\le {{\mathcal{K}}_{1}}{{\mathcal{J}}_{0}}{{\left( \int\limits_{0}^{{{r}_{0}}}{dr}\left\| D\phi (-r) \right\|_{{{L}^{\infty }}}^{2} \right)}^{1/2}}\mathcal{S}\left( t \right)+{{\mathcal{K}}_{2}}\mathcal{J}_{0}^{2}\left( 1+t \right){{\left( \int\limits_{0}^{t}{ds}\left\| ^{(4)}\mathcal{F}(s) \right\|_{{{L}^{\infty }}}^{2} \right)}^{1/2}}\mathcal{X}\left( t \right) \\ 
& +{{\mathcal{K}}_{3}}{{\mathcal{J}}_{0}}{{\left( \int\limits_{0}^{t}{ds}\left\| \phi (s) \right\|_{{{L}^{\infty }}}^{2} \right)}^{1/2}}{{\left\| \phi  \right\|}_{{{L}^{\infty }}}}\left( 1+{{\left\| \partial \phi  \right\|}_{{{L}^{\infty }}}} \right)+{{\mathcal{K}}_{4}}{{\mathcal{J}}_{0}}{{\left( \int\limits_{0}^{t}{ds}\left\| A(s) \right\|_{{{L}^{\infty }}}^{2} \right)}^{1/2}}\mathcal{U}\left( t \right) \\ 
& +{{\mathcal{K}}_{5}}\mathcal{J}{{}_{0}}{{\left( \int\limits_{0}^{t}{ds}\left\| \partial \phi (s) \right\|_{{{L}^{\infty }}}^{2} \right)}^{1/2}}\mathcal{W}\left( t \right)+{{k}_{0}}+{{k}_{1}}t ~ .
\end{split}
\end{equation}

Furthermore, to prove the global existence of the MKG equation, we must show that the norm $\left\| D\phi \left( t \right) \right\|_{{{L}^{\infty }}}^{{}}$ dan $\left\| ^{(4)}\mathcal{F}(t) \right\|_{{{L}^{\infty }}}^{{}}$ is finite. Therefore, we define a function as follows
\begin{equation}
\mathcal{G}\left( t \right)=\left\| ^{(4)}\mathcal{F}(t) \right\|_{{{L}^{\infty }}}^{{}}+\left\| D\phi \left( t \right) \right\|_{{{L}^{\infty }}} ~ .
\end{equation}
Restating the equation in the form of Gronwall inequality, we obtain
\begin{equation}
{{\mathcal{G}}^{2}}\left( t \right)\le \mathcal{N}\left( t \right)+\mathcal{Q}\left( t \right)\left( \int\limits_{0}^{t}{ds}\left\{ \left\| \mathcal{F}(s) \right\|_{{{L}^{\infty }}}^{2}+\left\| D\phi (s) \right\|_{{{L}^{\infty }}}^{2}+\left\| \phi (s) \right\|_{{{L}^{\infty }}}^{2}+\left\| \partial \phi (s) \right\|_{{{L}^{\infty }}}^{2}+\left\| A(s) \right\|_{{{L}^{\infty }}}^{2} \right\} \right) ~ ,
\end{equation}
with
\begin{eqnarray*}
\mathcal{Q}\left( t \right) &=&{{\mathcal{K}}_{1}}{{\mathcal{J}}_{0}}\mathcal{S}\left( t \right)+{{\mathcal{K}}_{2}}\mathcal{J}_{0}^{2}\left( 1+t \right)\mathcal{X}\left( t \right)+{{\mathcal{K}}_{3}}{{\mathcal{J}}_{0}}{{\left\| \phi  \right\|}_{{{L}^{\infty }}}}\left( 1+{{\left\| \partial \phi  \right\|}_{{{L}^{\infty }}}} \right) \nonumber\\ 
& &+{{\mathcal{K}}_{4}}{{\mathcal{J}}_{0}}\mathcal{U}\left( t \right)+{{\mathcal{K}}_{5}}\mathcal{J}{{}_{0}}\mathcal{W}\left( t \right)+\mathcal{J}_{0}^{2}\left( 1+t \right)\left\{ {{\mathcal{C}}_{1}}L(t)+{{\mathcal{C}}_{2}}M(t) \right.\left. +{{\mathcal{C}}_{3}}N(t) \right\} ~ , \\
\mathcal{N}\left( t \right)& = & {{c}_{0}}+{{c}_{1}}t+{{k}_{0}}+{{k}_{1}}t ~ .
\end{eqnarray*}

To get a bound of ${{\mathcal{G}}^{2}}\left( t \right)$, we only need to prove that ${{\mathcal{G}}^{2}}\left( t \right)$ continuous. Continuity of ${{\mathcal{G}}^{2}}\left( t \right)$ is depend on continuity of $\left\| ^{(4)}\mathcal{F}(t) \right\|_{{{L}^{\infty }}}^{{}},\left\| \phi (t) \right\|_{{{L}^{\infty }}}^{{}},\left\| \partial \phi (t) \right\|_{{{L}^{\infty }}}^{{}},\left\| A(t) \right\|_{{{L}^{\infty }}}^{{}}$ and $\left\| D\phi (t) \right\|_{{{L}^{\infty }}}$. Using the triangle inequality and the Sobolev estimate
\begin{equation}
\left\| f \right\|_{{{L}^{\infty }}}^{{}}\le \left\| f \right\|_{{{H}_{2}}} ~ ,
\end{equation}  
we can get the continuity of $\left\| ^{(4)}\mathcal{F}(t) \right\|_{{{L}^{\infty }}}$. 
Let $\varepsilon >0$, then we get
\begin{equation}
\begin{split}
\left| \left\| ^{(4)}\mathcal{F}(t+\varepsilon ) \right\|_{{{L}^{\infty }}}^{{}}-\left\| ^{(4)}\mathcal{F}(t) \right\|_{{{L}^{\infty }}}^{{}} \right|&\le \left\| ^{(4)}\mathcal{F}(t+\varepsilon ){{-}^{(4)}}F(t) \right\|_{{{L}^{\infty }}}^{{}} \\ 
&\le \left\| ^{(4)}\mathcal{F}(t+\varepsilon ){{-}^{(4)}}\mathcal{F}(t) \right\|_{{{H}_{2}}}^{{}} ~ , \\ 
\end{split}
\end{equation}
hence, we have
\begin{equation}
\left\| ^{(4)}\mathcal{F}(t+\varepsilon ){{-}^{(4)}}\mathcal{F}(t) \right\|_{{{H}_{2}}}^{{}}\to 0 ~ , \quad \varepsilon \to \text{0} ~ .
\end{equation}
The last step follows from continuity of $^{(4)}\mathcal{F}(t)$ as a curve in $H_2$. The same reason clearly applies to $\left\| D\phi (t) \right\|_{{{L}^{\infty }}}$.

So far we have proved that $\left\| ^{(4)}\mathcal{F}(t) \right\|_{{{L}^{\infty }}}$ and $\left\| D\phi (t) \right\|_{{{L}^{\infty }}}$ cannot blow up in a finite time. Another estimate we need to prove are $\left\| {{A}_{i}}\left( t \right) \right\|_{{{L}^{\infty }}}^{{}}$ and $\left\| \phi \left( t \right) \right\|_{{{L}^{\infty }}}^{{}}$. By considering the temporal gauge condition, we have
\begin{equation}
{{A}_{i}}\left( t,x \right)={{A}_{i}}\left( 0,x \right)+\int\limits_{0}^{t}{{{E}_{i}}\left( s \right)ds}\;,
\end{equation}
then we get
\begin{equation}
\left\| {{A}_{i}}\left( t,x \right) \right\|_{{{L}^{\infty }}}^{{}}=\left\| {{A}_{i}}\left( 0,x \right) \right\|_{{{L}^{\infty }}}^{{}}+\int\limits_{0}^{t}{\left\| {{E}_{i}}\left( s \right) \right\|_{{{L}^{\infty }}}^{{}}ds}\;,
\end{equation}
\begin{equation}
\left\| \phi \left( t,x \right) \right\|_{{{L}^{\infty }}}^{{}}=\left\| \phi \left( 0,x \right) \right\|_{{{L}^{\infty }}}^{{}}+\int\limits_{0}^{t}{\left\| {{\partial }_{0}}\phi \left( s \right) \right\|_{{{L}^{\infty }}}^{{}}ds}\:,
\end{equation}
which are the key points to complete the proof of the global existence of Maxwell Klein-Gordon system.

To prove the global existence, we must show that the norm ${{\left( {{H}_{2}}\times {{H}_{1}} \right)}^{2}}$ of $\left( {{A}_{i}},{{E}_{i}},\phi ,{{\partial }_{t}}\phi  \right)$ does not blow up for a finite time. Therefore, we define a functions that are elements of ${{\left( {{H}_{2}}\times {{H}_{1}} \right)}^{2}}$ as
\begin{equation}
{{\calE}_{0}}\text{= }\frac{1}{2}\int\limits_{{{R}^{3}}}{dx\text{ }}\left( {{\delta }_{\Lambda \Sigma }}\left\{ E_{i}^{\Lambda }E_{i}^{\Sigma }+{{\partial }_{j}}A_{i}^{\Lambda }\text{ }{{\partial }_{j}}A_{i}^{\Sigma }+mA_{i}^{\Lambda }A_{i}^{\Sigma } \right\}+{{\left| {{\partial }_{0}}\phi  \right|}^{2}}+{{\left| {{\partial }_{i}}\phi  \right|}^{2}}+m{{\left| \phi  \right|}^{2}} \right) ~ ,
\end{equation}
\begin{equation}
{{\calE}_{1}}\text{= }\frac{1}{2}\int\limits_{{{R}^{3}}}{dx\left\{ {{\delta }_{\Lambda \Sigma }}\left( {{\partial }_{j}}E_{i}^{\Lambda }\text{ }{{\partial }_{j}}E_{i}^{\Sigma }+{{\partial }_{j}}{{\partial }_{k}}A_{i}^{\Lambda }\text{ }{{\partial }_{j}}{{\partial }_{k}}A_{i}^{\Sigma } \right)+{{\left| {{\partial }_{i}}{{\partial }_{0}}\phi  \right|}^{2}}+{{\left| {{\partial }_{j}}{{\partial }_{i}}\phi  \right|}^{2}} \right\}} ~ ,
\end{equation}
where $m>0$, is a positive constant. The function ${{\left( {{\calE}_{0}}+{{\calE}_{1}} \right)}^{1/2}}$ meets the norm ${{\left( {{H}_{2}}\times {{H}_{1}} \right)}^{2}}$, so that to obtain the global existence of MKG equations with general coupling is sufficient by showing that ${{\calE}_{0}}$ and ${{\calE}_{1}}$ are not blow-up for a finite time.

The first derivative with respect to time of $\calE_{0}$,
\begin{equation}
\begin{split}
\left| \frac{d{{\calE}_{0}}}{dt} \right| &\le {{\mathcal{C}}_{0}}\left\{ Y\left( t \right) \right.\left( \left\| D\phi  \right\|_{{{L}^{\infty }}}^{{}}+1+\left\| \phi  \right\|_{{{L}^{\infty }}}^{{}} \right)+\left\| ^{(4)}F(t) \right\|_{{{L}^{\infty }}}^{{}}\left\| \partial \phi  \right\|_{{{L}^{\infty }}}^{{}}\left( 1+\left\| \phi  \right\|_{{{L}^{\infty }}}^{{}} \right) \\ 
& +\left( {{\left\| \partial \phi  \right\|}_{{{L}^{\infty }}}}+{{\left\| D\phi  \right\|}_{{{L}^{\infty }}}} \right)Z\left( t \right)\left. +\mathcal{I}\left( {{\left\| \phi  \right\|}_{{{L}^{\infty }}}} \right)+1 \right\}
{{\calE}_{0}}  \:,
\end{split}
\end{equation}
with
\begin{equation}
\begin{split}
Y\left( t \right)&=8\sum\limits_{n=1}^{N}{{{b}_{n}}}\left\| \phi  \right\|_{{{L}^{\infty }}}^{n+6}+\sum\limits_{n=1}^{N}{{{b}_{n}}}\left\| \phi  \right\|_{{{L}^{\infty }}}^{n+5}+12\sum\limits_{n=1}^{N}{{{b}_{n}}}\left\| \phi  \right\|_{{{L}^{\infty }}}^{n+3} \\ 
&+6{{C}_{1}}\left( \left\| \phi  \right\|_{{{L}^{\infty }}}^{2}+\left\| \phi  \right\|_{{{L}^{\infty }}}^{3} \right)+\left( {{C}_{2}}+{{C}_{3}} \right)\left\| \phi  \right\|_{{{L}^{\infty }}}^{{}}+{{C}_{3}} \\ 
\end{split}
\end{equation}
\begin{equation}
Z\left( t \right)={{\left\| \phi  \right\|}_{{{L}^{\infty }}}}+\left\| \phi  \right\|_{{{L}^{\infty }}}^{3}+\sum\limits_{n=1}^{N}{\left\| \phi  \right\|_{{{L}^{\infty }}}^{n+2}}+\sum\limits_{n=1}^{N}{\left\| \phi  \right\|_{{{L}^{\infty }}}^{n+4}}
\end{equation}	
where we used the Holder inequality.
Integrating the inequality above, we obtain
\begin{equation}
{{\calE}_{0}}\left( t \right)\le {{\calE}_{0}}\left( 0 \right)\exp \left( \int\limits_{0}^{t}{\mathcal{P}\left( t \right)dt} \right) ~ ,
\end{equation}
with
\begin{equation}
\begin{split}
\mathcal{P}\left( t \right)&=Y\left( t \right)\left( \left\| D\phi  \right\|_{{{L}^{\infty }}}^{{}}+1+\left\| \phi  \right\|_{{{L}^{\infty }}}^{{}} \right)+\left\| ^{(4)}F(t) \right\|_{{{L}^{\infty }}}^{{}}\left\| \partial \phi  \right\|_{{{L}^{\infty }}}^{{}}\left( 1+\left\| \phi  \right\|_{{{L}^{\infty }}}^{{}} \right) \\ 
& +\left( {{\left\| \partial \phi  \right\|}_{{{L}^{\infty }}}}+{{\left\| D\phi  \right\|}_{{{L}^{\infty }}}} \right)Z\left( t \right)+\mathcal{I}\left( {{\left\| \phi  \right\|}_{{{L}^{\infty }}}} \right)+1 ~ .
\end{split}
\end{equation}
Since all of $\left\| ^{(4)}\mathcal{F}(t) \right\|_{{{L}^{\infty }}}^{{}},\left\| \phi (t) \right\|_{{{L}^{\infty }}}^{{}},\left\| \partial \phi (t) \right\|_{{{L}^{\infty }}}^{{}},\left\| A(t) \right\|_{{{L}^{\infty }}}^{{}}$ and $\left\| D\phi (t) \right\|_{{{L}^{\infty }}}^{{}}$ does not blow-up for a finite time, therefore, ${{\calE}_{0}}\left( t \right)$ is bounded for all time.

Finally, computing the time derivative of $\calE_{1}$, and after some calculations, we have
\begin{equation}
\left| \frac{d{{\calE}_{1}}}{dt} \right|\le \left\{{X}\left( t \right)+{W}\left( t \right)+{P}\left( t \right)+{U}\left( t \right) \right\}{{\calE}_{1}} ~ ,
\end{equation}
with
\begin{equation}
\begin{split}
X\left( t \right)&=Y\left( t \right)\left\{ \left( {{\left\| D\phi  \right\|}_{{{L}^{\infty }}}}{{\left\| \partial \phi  \right\|}_{{{L}^{\infty }}}}+\left\| D\phi (t) \right\|_{{{L}^{\infty }}}^{{}}+\left\| \phi  \right\|_{{{L}^{\infty }}}^{{}}+\left\| \partial \phi  \right\|_{{{L}^{\infty }}}^{{}}+1 \right)\calE_{0}^{1/2}+1 \right\} \\ 
& +\calE_{0}^{1/2}\left\| ^{(4)}F(t) \right\|_{{{L}^{\infty }}}^{{}}\left( \left\| \partial \phi  \right\|_{{{L}^{\infty }}}^{2}\left\| \phi  \right\|_{{{L}^{\infty }}}^{{}}+\left\| \partial \phi  \right\|_{{{L}^{\infty }}}^{{}} \right)+\left\| \phi  \right\|_{{{L}^{\infty }}}  ~ ,
\end{split}
\end{equation}
\begin{equation}
\begin{split}
{W}\left( t \right)&={{\left\| D\phi  \right\|}_{{{L}^{\infty }}}}Y\left( t \right)\left( {{\left\| \partial \phi  \right\|}_{{{L}^{\infty }}}}\calE_{0}^{1/2}+{{\left\| \phi  \right\|}_{{{L}^{\infty }}}}+{{\left\| \partial \Psi  \right\|}_{{{L}^{\infty }}}}\calE_{0}^{1/2}{{\left\| \partial \phi  \right\|}_{{{L}^{\infty }}}} \right) \\ 
& +{{\left\| ^{(4)}F(t) \right\|}_{{{L}^{\infty }}}}{{\left\| \phi  \right\|}_{{{L}^{\infty }}}}{{\left\| \partial \phi  \right\|}_{{{L}^{\infty }}}}\left( {{\left\| \partial \phi  \right\|}_{{{L}^{\infty }}}}\calE_{0}^{1/2}+{{\left\| \phi  \right\|}_{{{L}^{\infty }}}}\calE_{0}^{1/2}+{{\left\| \partial \Psi  \right\|}_{{{L}^{\infty }}}}\calE_{0}^{1/2}{{\left\| \partial \phi  \right\|}_{{{L}^{\infty }}}} \right) \\ 
& +{{\left\| D\phi (t) \right\|}_{{{L}^{\infty }}}}{{\left\| \partial \Psi  \right\|}_{{{L}^{\infty }}}}\tilde{Z}\calE_{0}^{1/2}+\left\| D\phi  \right\|_{{{L}^{\infty }}}^{{}}\left\| \phi  \right\|_{{{L}^{\infty }}}^{{}}\left\{ \hat{Z}\left\| \partial \phi  \right\|_{{{L}^{\infty }}}^{{}}\calE_{0}^{1/2}\left. +\tilde{Z} \right\} \right. \\ 
& +{{\left\| \partial \phi  \right\|}_{{{L}^{\infty }}}}Y(t)\left( {{\left\| \phi  \right\|}_{{{L}^{\infty }}}}+\calE_{0}^{1/2}\left\| \phi  \right\|_{{{L}^{\infty }}}^{2}+\calE_{0}^{1/2}{{\left\| \partial \phi  \right\|}_{{{L}^{\infty }}}}{{\left\| \phi  \right\|}_{{{L}^{\infty }}}}+{{\left\| D\phi (t) \right\|}_{{{L}^{\infty }}}}\calE_{0}^{1/2} \right) \\ 
& +{{\left\| ^{(4)}F(t) \right\|}_{{{L}^{\infty }}}}\left\| \partial \phi  \right\|_{{{L}^{\infty }}}^{2}\left\| \phi  \right\|_{{{L}^{\infty }}}^{3}\left( \left\| \partial \phi  \right\|_{{{L}^{\infty }}}^{2}+{{\left\| \phi  \right\|}_{{{L}^{\infty }}}} \right)\calE_{0}^{1/2}+\left\| \partial \phi  \right\|_{{{L}^{\infty }}}^{2}\left\| \phi  \right\|_{{{L}^{\infty }}}^{2}    ~ ,
\end{split}
\end{equation}
\begin{equation}
\begin{split}
{P}\left( t \right)&=\left\| \phi  \right\|_{{{L}^{\infty }}}^{2}\left\| \partial \phi  \right\|_{{{L}^{\infty }}}^{2}+\tilde{Z}\left\| D\phi (t) \right\|_{{{L}^{\infty }}}^{{}}\left\| \partial \phi  \right\|_{{{L}^{\infty }}}^{{}}\left\| \phi  \right\|_{{{L}^{\infty }}}^{{}}+\left\| ^{(4)}F(t) \right\|_{{{L}^{\infty }}}\left\| \partial \phi  \right\|_{{{L}^{\infty }}}\left\| \phi  \right\|_{{{L}^{\infty }}}^{2} \\ 
& +\left\| ^{(4)}F(t) \right\|_{{{L}^{\infty }}}^{{}}\left\| \partial \phi  \right\|_{{{L}^{\infty }}}+\left\| \phi  \right\|_{{{L}^{\infty }}}^{{}}T\left( t \right)+Y\left( t \right)\left( T\left( t \right)+{{\left\| \phi  \right\|}_{{{L}^{\infty }}}}+{{\left\| A \right\|}_{{{L}^{\infty }}}}+1 \right) \\ 
& +\left\| \phi  \right\|_{{{L}^{\infty }}}S\left( t \right)+\left\| \phi  \right\|_{{{L}^{\infty }}}\mathcal{Z}\left( \left\| \phi  \right\|_{{{L}^{\infty }}} \right)+Y\left( t \right)S\left( t \right)+\calE_{0}^{1/2}Y\left( t \right)\left( {{\left\| \partial \phi  \right\|}_{{{L}^{\infty }}}}+{{\left\| \phi  \right\|}_{{{L}^{\infty }}}} \right) \\ 
& +Y\left( t \right)\mathcal{Z}\left( \left\| \phi  \right\|_{{{L}^{\infty }}} \right)+\left\| \phi  \right\|_{{{L}^{\infty }}}^{2}\left\| \partial \phi  \right\|_{{{L}^{\infty }}}^{3}\left\| ^{(4)}F(t) \right\|_{{{L}^{\infty }}}\calE_{0}^{1/2}+\left\| D\phi  \right\|_{{{L}^{\infty }}}\left\| \partial \phi  \right\|_{{{L}^{\infty }}}\left\| \phi  \right\|_{{{L}^{\infty }}}^{{}}\calE_{0}^{1/2}Y\left( t \right) \\ 
& +\tilde{Z}\left\| \partial \phi  \right\|_{{{L}^{\infty }}}\left\| \phi  \right\|_{{{L}^{\infty }}}\left\| D\phi (t) \right\|_{{{L}^{\infty }}}\left( \calE_{0}^{1/2}+\calE_{0}^{1/2}\left( \left\| \phi  \right\|_{{{L}^{\infty }}}^{{}}{{\left\| \partial \phi  \right\|}_{{{L}^{\infty }}}}+\left\| \phi  \right\|_{{{L}^{\infty }}}^{2} \right) \right) \\ 
& +\left\| ^{(4)}F(t) \right\|_{{{L}^{\infty }}}^{{}}\left( \left\| \partial \Psi  \right\|_{{{L}^{\infty }}}^{{}}\left\| \partial \phi  \right\|_{{{L}^{\infty }}}^{2}\calE_{0}^{1/2}\left\| \phi  \right\|_{{{L}^{\infty }}}+\left\| \partial \phi  \right\|_{{{L}^{\infty }}}^{2}\left\| \phi  \right\|_{{{L}^{\infty }}}^{{}}\calE_{0}^{1/2} \right) \\ 
& +\left\| ^{(4)}F(t) \right\|_{{{L}^{\infty }}}^{{}}\left( \left\| \partial \Psi  \right\|_{{{L}^{\infty }}}^{{}}\left( \left\| \partial \phi  \right\|_{{{L}^{\infty }}}^{{}}\calE_{0}^{1/2}+\left\| \phi  \right\|_{{{L}^{\infty }}} \right)+\left\| \partial \Psi  \right\|_{{{L}^{\infty }}}^{2}\left\| \partial \phi  \right\|_{{{L}^{\infty }}}\calE_{0}^{1/2} \right)    ~ ,
\end{split}
\end{equation}
\begin{equation}
U=S\left( t \right)+T\left( t \right)+\mathcal{Z}\left( \left\| \phi  \right\|_{{{L}^{\infty }}}^{{}} \right),
\end{equation}
where
\begin{equation}
\tilde{Z}=\sum\limits_{n=1}^{N}{\frac{\left( n+2 \right)}{n+1}{{b}_{n}}\left\| \phi  \right\|_{{{L}^{\infty }}}^{n+2}}+{{C}_{1}}\left\| \phi  \right\|_{{{L}^{\infty }}},
\end{equation}
\begin{equation}
\hat{Z}=\sum\limits_{n=0}^{N}{\frac{n+2}{n+1}{{b}_{n}}{{\left\| \phi  \right\|}^{n+1}}}+\sum\limits_{n=0}^{N}{\left( n+3 \right){{b}_{n}}{{\left\| \phi  \right\|}^{n+2}}}+{{C}_{1}},
\end{equation}
\begin{equation}
\begin{split}
{S}(t)&={{\left\| \partial \phi  \right\|}_{{{L}^{\infty }}}}\tilde{Z}\left( {{\left\| \phi  \right\|}_{{{L}^{\infty }}}}\left\| ^{(4)}F(t) \right\|_{{{L}^{\infty }}}^{{}}\calE_{0}^{1/2}+\chi\left( \left\| \phi  \right\|_{{{L}^{\infty }}}^{{}} \right) \right) \\ 
& +\calE_{0}^{1/2}{{\left\| \partial \phi  \right\|}_{{{L}^{\infty }}}}{{{\tilde{Z}}}^{2}}\left( 1+{{\left\| \phi  \right\|}_{{{L}^{\infty }}}} \right)\left( \left\| D\phi (t) \right\|_{{{L}^{\infty }}}^{{}}+{{\left\| \partial \phi  \right\|}_{{{L}^{\infty }}}} \right) \\ 
& +\calE_{0}^{1/2}Z\left( \left\| D\phi (t) \right\|_{{{L}^{\infty }}}^{{}}+1 \right)\left( {{\left\| \partial \phi  \right\|}_{{{L}^{\infty }}}}+{{\left\| \phi  \right\|}_{{{L}^{\infty }}}} \right) \\ 
& +\calE_{0}^{1/2}\left\| ^{(4)}F(t) \right\|_{{{L}^{\infty }}}^{{}}\left\| \partial \phi  \right\|_{{{L}^{\infty }}}^{{}}+\left\| D\phi (t) \right\|_{{{L}^{\infty }}}^{{}}\left\| \phi  \right\|_{{{L}^{\infty }}}^{2}\left\| \partial \phi  \right\|_{{{L}^{\infty }}}^{{}}\calE_{0}^{1/2}\left( 1+{{\left\| \phi  \right\|}_{{{L}^{\infty }}}} \right) \\ 
& +\hat{Z}\left\| \partial \phi  \right\|_{{{L}^{\infty }}}^{{}}\left( 1+{{\left\| \phi  \right\|}_{{{L}^{\infty }}}} \right){{\calE}_{0}}    ~ ,
\end{split}
\end{equation}
\begin{equation}
T\left( t \right)=\calE_{0}^{1/2}\left( 1+{{\left\| \phi  \right\|}_{{{L}^{\infty }}}} \right)\tilde{Z}+\left\| ^{(4)}F(t) \right\|_{{{L}^{\infty }}}^{{}}\left\| \phi  \right\|_{{{L}^{\infty }}}^{{}}+Z\left( \left\| D\phi (t) \right\|_{{{L}^{\infty }}}^{{}}+1 \right),
\end{equation}
with
\begin{equation}
{\mathcal{Z}} \left( \left\| \phi  \right\| \right) \le \begin{cases}
{{\left\| \phi  \right\|}_{{{L}^{\infty }}}}\sum\limits_{n=1}^{N}{\left( n-1 \right)\left\| \Psi  \right\|_{{{L}^{\infty }}}^{n-2}}\calE_{0}^{1/2}+\sum\limits_{n=1}^{N}{n\left\| \Psi  \right\|_{{{L}^{\infty }}}^{n-1}} ~ , &\text{if $V$ is of the form \eqref{potential1}} ~ ,\\
\left( 1+\calE_{0}^{1/2}{{\left\| \phi  \right\|}_{{{L}^{\infty }}}} \right) ~ , & \text{if $V$  is either of the form \eqref{potential2}  }  \\
& \text{ or \eqref{potential3}} ~ ,
\end{cases}
\end{equation}
\begin{equation}
\chi \left( \left\| \phi  \right\| \right) \le \begin{cases}
\calE_{0}^{1/2}\sum\limits_{n=1}^{N}{n\left\| \Psi  \right\|_{{{L}^{\infty }}}^{n-1}} ~ , &\text{if $V$ is of the form \eqref{potential1}} ~ ,\\
\calE_{0}^{1/2} ~ , & \text{if $V$  is either of the form \eqref{potential2} or \eqref{potential3}} ~ ,
\end{cases}
\end{equation}
Integrating the inequality, we get
\begin{equation}
\label{E1Inequality}
{{\calE}_{1}}\left( t \right)\le {{\calE}_{1}}\left( 0 \right)\exp \left( \int\limits_{0}^{t}{\left\{{X}\left( t \right)+{W}\left( t \right)+{P}\left( t \right)+{U}\left( t \right)+ \right\}dt} \right)\:.
\end{equation}
The right hand side of (\ref{E1Inequality}) is a mixed expression of $\left\| ^{(4)}\mathcal{F}(t) \right\|_{{{L}^{\infty }}}^{{}},\left\| \phi (t) \right\|_{{{L}^{\infty }}}^{{}},\left\| \partial \phi (t) \right\|_{{{L}^{\infty }}}^{{}},$ $\left\| A(t) \right\|_{{{L}^{\infty }}}^{{}}$, $\left\| D\phi (t) \right\|_{{{L}^{\infty }}}^{{}}$, and ${{\calE}_{0}}\left( t \right)$ which is a bounded with respect to time. Therefore, by applying the Gronwall inequality, we find that ${{\calE}_{1}}\left( t \right)$ also cannot blow up in a finite time. 

Thus we have proven the global existence of MKG system with general coupling in temporal gauge condition
\begin{namedtheorem}[Main]
	Let ${{u}_{0}}=\left({{A}_{i}^{\Sigma}}(0),{{E}_{i}^{\Sigma}}(0),\phi^{a}(0) ,{{\partial }_{t}}\phi^{a}(0)\right) $ be the initial data on ${{\left( {{H}_{2}}\times {{H}_{1}} \right)}^{2}}$ such that the initial flat energy function on (\ref{modE}) is finite. If the internal scalar manifold satisfies Lemma \ref{scalarManifoldLemma} and Assumption \ref{boundpotAssumptions}, the gauge couplings satisfy Assumption \ref{GaugeCouplingAssumptions}, and the scalar potential is of the form given in Assumption \ref{ScalarpotentialAssumptions},  then there exist a unique global solution $u\left( t \right)\in {{\left( {{H}_{2}}\times {{H}_{1}} \right)}^{2}}$ of MKG equation with general gauge couplings in temporal gauge which solves the corresponding equations (\ref{eom1}) and (\ref{eom2}) for all $t\in \left( 0 ,\infty  \right)$. 
\end{namedtheorem}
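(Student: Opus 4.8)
The plan is to upgrade the local existence result of \cite{akbar2} to a global one by ruling out finite-time blow-up of the $(H_2\times H_1)^2$ norm of the data vector $u=(A_i^\Sigma,E_i^\Sigma,\phi^a,\partial_t\phi^a)$. By the local theory a solution exists on a maximal interval $[0,T^*)$, and if $T^*<\infty$ the $(H_2\times H_1)^2$ norm must become unbounded as $t\to T^*$; so it suffices to show that this norm stays finite on every finite interval. First I would record the two structural inputs that make the estimates tractable: conservation of the energy $\mathcal{E}_0$ of \eqref{energy}, which together with Assumptions \ref{GaugeCouplingAssumptions} and \ref{boundpotAssumptions} controls $\|E\|_{L^2}$, $\|H\|_{L^2}$, $\|D\phi\|_{L^2}$ and $\|V\|_{L^2}$; and the linear-growth bound \eqref{preposisi}, $\mathcal{J}(t)\le C_N\mathcal{J}_0(1+t)$, for the flat energy functional. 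Together these supply every $L^2$-level quantity that feeds the H\"older and Sobolev estimates of Section \ref{sec:Estimates}.

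Next I would exploit the Kirchhoff-type representations \eqref{es1}--\eqref{es2} of $\mathcal{F}_{\mu\gamma}^\Sigma$ and $D_\mu\phi^a$ as the sum of a linear (initial-data) term and volume integrals over the backward light cone. Estimating each nonlinear term by H\"older's inequality and the Sobolev embedding $\|f\|_{L^\infty}\le\|f\|_{H_2}$, using Lemma \ref{scalarManifoldLemma} to bound $g_{a\bar b}$, $\partial g_{a\bar b}$ and $K$ by polynomials in $\|\phi\|$, and using Assumption \ref{ScalarpotentialAssumptions} to bound $\partial_d V$, one arrives at the pointwise bounds \eqref{result es1}--\eqref{result es2}. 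Writing $\mathcal{G}(t)=\|{}^{(4)}\mathcal{F}(t)\|_{L^\infty}+\|D\phi(t)\|_{L^\infty}$, these assemble into an integral inequality of the shape $\mathcal{G}^2(t)\le\mathcal{N}(t)+\mathcal{Q}(t)\int_0^t\big(\|\mathcal{F}(s)\|_{L^\infty}^2+\|D\phi(s)\|_{L^\infty}^2+\|\phi(s)\|_{L^\infty}^2+\|\partial\phi(s)\|_{L^\infty}^2+\|A(s)\|_{L^\infty}^2\big)\,ds$. To close this into a Gronwall estimate I must first know $\mathcal{G}$ is continuous, which follows from $\|f\|_{L^\infty}\le\|f\|_{H_2}$ together with the continuity of $t\mapsto{}^{(4)}\mathcal{F}(t)$ and $t\mapsto D\phi(t)$ as curves in $H_2$ inherited from the local solution. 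Gronwall then gives $\mathcal{G}(t)<\infty$ on every $[0,T]$: neither $\|{}^{(4)}\mathcal{F}(t)\|_{L^\infty}$ nor $\|D\phi(t)\|_{L^\infty}$ blows up in finite time.

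The temporal gauge condition now closes the loop for the remaining pointwise norms. From $A_i(t,x)=A_i(0,x)+\int_0^t E_i(s)\,ds$ and $\phi(t,x)=\phi(0,x)+\int_0^t\partial_0\phi(s)\,ds$ one bounds $\|A_i(t)\|_{L^\infty}$ and $\|\phi(t)\|_{L^\infty}$ by time integrals of quantities already shown to be finite, hence they are finite for all finite $t$. Finally I would control the Sobolev energies $\calE_0$ and $\calE_1$, whose sum is equivalent to the $(H_2\times H_1)^2$ norm: differentiating in $t$ and using the equations of motion \eqref{eom1}--\eqref{eom2} and their first derivatives, the right-hand sides are bounded by $\mathcal{P}(t)\calE_0$ and by $\{X(t)+W(t)+P(t)+U(t)\}\calE_1$ respectively, with coefficients built entirely from the $L^\infty$ norms already under control. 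Two successive applications of Gronwall give $\calE_0(t),\calE_1(t)<\infty$ on every finite interval, so $T^*=\infty$; uniqueness of the global solution is inherited from the local theory.

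The main obstacle is the structural point in the second step: arranging the estimates \eqref{result es1}--\eqref{result es2} so that the resulting inequality for $\mathcal{G}$ is genuinely of Gronwall type. The nonlinear vertices peculiar to this model --- the cubic gauge-field self-interactions together with the curvature contributions $\partial g_{a\bar b}$, $\partial^2 h_{\Lambda\Gamma}$, $\partial^2 k_{\Lambda\Gamma}$ that are absent in the flat, constant-coupling case --- generate high-degree polynomial dependence on $\|\phi\|_{L^\infty}$ and $\|\partial\phi\|_{L^\infty}$ in the coefficient functions $L,M,N$ and $\mathcal{I},\mathcal{H}$. It is exactly here that Assumption \ref{GaugeCouplingAssumptions} (bounded $h_{\Lambda\Sigma},k_{\Lambda\Sigma}$), Lemma \ref{scalarManifoldLemma} with Assumption \ref{boundpotAssumptions} (the K\"ahler potential trapped between two polynomials in $|\phi|$), and Assumption \ref{ScalarpotentialAssumptions} (polynomial, sine-Gordon, or Toda $V$) are indispensable: they guarantee that these coefficients are dominated by quantities the conserved energy and the linearly growing flat energy already bound, so that the Gronwall loop actually closes and no supercritical term survives.
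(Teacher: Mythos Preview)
Your proposal is correct and follows essentially the same route as the paper: linear growth of the flat energy \eqref{preposisi}, the Kirchhoff representations \eqref{es1}--\eqref{es2} estimated term-by-term into \eqref{result es1}--\eqref{result es2}, the Gronwall inequality for $\mathcal{G}(t)=\|{}^{(4)}\mathcal{F}(t)\|_{L^\infty}+\|D\phi(t)\|_{L^\infty}$ closed via continuity in $H_2$, the temporal-gauge integration for $\|A_i\|_{L^\infty}$ and $\|\phi\|_{L^\infty}$, and finally two Gronwall passes on $\calE_0$ and $\calE_1$. Your framing via the local theory of \cite{akbar2} and a maximal-interval continuation argument is slightly more explicit than the paper's presentation, but the substance and order of the estimates are the same.
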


\section{Acknowledgments}

The work in this paper is supported by P2MI FMIPA ITB 2021, Riset KK ITB 2021, and Riset PDUPT Kemendikbudristek-ITB 2021.


\end{document}